\mathchardef\hyphenmathcode=\mathcode`\-
\newtheorem{theorem}{Theorem}[section]
\newtheorem{definition}{Definition}[section]
\newtheorem{corollary}[theorem]{Corollary}
\newtheorem{remark}[theorem]{Remark}
\newcommand{\blind}{1}
\begin{document}

\def\spacingset#1{\renewcommand{\baselinestretch}%
{#1}\small\normalsize} \spacingset{1}

\spacingset{1.375} 

\if1\blind
{
  \title{\bf Time Series Analysis of Rankings: \\A GARCH-Type Approach}
  \author{Luiza S.C. Piancastelli$^\sharp$\footnote{\texttt{e-mail}: \href{mailto:luiza.piancastelli@ucdconnect.ie}{luiza.piancastelli@ucd.ie}}\,\, and Wagner Barreto-Souza$^\sharp$\footnote{\texttt{e-mail}: \href{mailto:wagner.barreto-souza@ucd.ie}{wagner.barreto-souza@ucd.ie}}
  \\
  {\normalsize \it $^\sharp$School of Mathematics and Statistics, University College Dublin, Belfield, Republic of Ireland}}
  \maketitle
} \fi
\if0\blind
{
  \bigskip
  \bigskip
  \bigskip
  \begin{center}
    {\LARGE\bf Time Series Analysis of Ranking Data}
\end{center}
  \medskip
} \fi
\date{}

\bigskip
\addtocontents{toc}{\protect\setcounter{tocdepth}{1}}

\begin{abstract}
Ranking data are frequently obtained nowadays but there are still scarce methods for treating these data when temporally observed.  The present paper contributes to this topic by proposing and developing novel models for handling time series of ranking data. We introduce a class of time-varying ranking models inspired by the Generalized AutoRegressive Conditional Heteroskedasticity (GARCH) models. More specifically, the temporal dynamics are defined by the conditional distribution of the current ranking given the past rankings, which are assumed to follow a Mallows distribution, which implicitly depends on a distance. Then, autoregressive and feedback components are incorporated into the model through the conditional expectation of the associated distances. Theoretical properties of our ranking GARCH models such as stationarity and ergodicity are established. The estimation of parameters is performed via maximum likelihood estimation when data is fully observed. We develop a Monte Carlo Expectation-Maximisation algorithm to deal with cases involving missing data. Monte Carlo simulation studies are presented to study the performance of the proposed estimators under both non-missing and missing data scenarios. A real data application about the weekly ranking of professional tennis players from 2015 to 2019 is presented under our proposed ranking GARCH models.

\end{abstract}
{\it \textbf{Keywords}:} Kendall and Hamming distances, Mallows model, Monte Carlo EM algorithm, Ranking data, Stationarity.


\section{Introduction}

Ranking data arises in situations where some order is attributed to a set of alternatives. For example, marketing researchers often ask judges to order options from best to worst to understand their preferences. Sports and competitions are another context where rankings are common. For instance, they can represent leaderboards, teams standing on long-term championships, or racing results. Ranking information is increasingly common in modern technology, playing important roles in search engines and recommendation systems. 

A complete ranking of a set of $k$ distinct alternatives with arbitrary labels $\{1, \ldots, k\}$ will be denoted by $\pi$. The latter is a point in the space of permutations of $k$ items, $\mathcal{P}_k$. The statistical analysis of ranking collections is a non-standard task where it must be acknowledged that data points carry multivariate relative information. A sample of $n$ observed rankings of $\{1, \ldots, k\}$ is represented as $\underline{\pi}$, where $\underline{\pi}$ is a $n \times k$ matrix. In addition to the unique challenges conveyed by $\pi \in \mathcal{P}_k$, it is common in practical situations that rankings are incomplete. An incomplete ranking observation is one that does not elicit the ordering of all elements of the item set and will be denoted by $\tilde{\pi}$. For instance, in top$-l$ elicitation, judges are asked to provide their first $l<k$ choices. This is a common approach for dealing with moderate to large item sets, frequently applied within educational (university rankings) and political (voting) settings. Another possibility is that $\tilde{\pi}$ is generated from pairwise choices. In this case, preferences among pairs $\{i, j\}, i \neq j, i,j \in \{1, \ldots, k\}$ are collected for different $i,j$ combinations. Most often, not all combinations necessary to produce a complete ranking are elicited. For example, consider $k=3$ and the set of decisions $\mathcal{C} = \{1 \succ 2, 1 \succ 3\}$ where $i \succ j$ indicates strict preference of $i$ over $j$. The observation of $\mathcal{C}$ generates an incomplete ranking as it is compatible with $\pi = (1,2,3)$ and $\pi = (1,3,2)$.

Statistical treatment of $\pi$ or $\tilde{\pi}$ aims to describe the rankings-generating mechanism via probabilistic models. This can take several directions that vary on their underlying assumptions. Some possibilities are the \textit{Thurstone, multistage} and \textit{distance-based models}. Under \textit{Thurstone} (or order-statistic) \textit{models}, it is assumed that there exist latent scores associated with items and a ranking is generated from their ordering. The Plackett-Luce \citep{plackett1975} model is the most famous within \textit{multistage models}, a class pinned by the independence of irrelative alternatives assumption (IIA). IIA states that the relative preference between two alternatives should not be affected by the introduction or removal of other alternatives. For instance, if a judge prefers option 1 over 2, they should continue to do so regardless of the presence of a third. Finally, \textit{distance-based models} assume the existence of a \textit{consensus ranking}, say $\pi_0$, and attribute probability mass according to the ranking's distances to $\pi_0$. The Mallows model \citep{mal1957} and its generalisations \citep{fliver1986} are prominent approaches in this class, highly regarded for their interpretability and parsimony. For a more detailed description of probability models for rankings, we recommend \cite{marden95}. 

Our contribution is situated within \textit{distance based} models, a class that has seen many recent developments. A significant focus in this area has been on \textit{rank aggregation}. For instance, the works by \cite{iru2018} and \cite{crispino2023} explore consensus estimation using the Mallows model with alternative distances. In \cite{meila2010} and \cite{crispino2017}, ranking aggregation from top-$l$ rankings and pairwise preferences with inconsistencies are addressed, respectively. Extensions of the Mallows model have also been employed to tackle quality and stability in rankings \citep{li2019} and to handle rank aggregation from non-homogeneous populations \citep{vitelli2018}. While the rank aggregation task is well-studied, the literature on dynamic rankings is still scarce.  
In practice, rankings that evolve over time arise naturally in numerous real-world contexts, reflecting the dynamic nature of preferences. For instance, in sports, studying the evolution of teams or players' rankings can offer insights into performance trends and competitive balance. In marketing and consumer trends, tracking changes in customer preferences through product rankings over time can guide marketing strategies and inventory planning. Other real-world examples of ranking time series are the dynamical ranking of web pages within search engines, the yearly rankings of academic institutions, and candidate rankings from sequential election polls. 

 To the best of our knowledge, there are only two approaches to handle time-varying rankings with stochastic dynamics. The first work in this area is due to \cite{asfetal2017}, where a time-varying ranking model based on the Mallows model is proposed under the Bayesian framework. Their model assumes conditional independence among the rankings and the temporal dynamics are driven by a further Mallows model for the consensus ranking (Markovian structure) and a Gaussian random walk for the scale parameter. A second approach was recently introduced by \cite{holzou22}, where the model is constructed based on the Plackett-Luce distribution with time-varying parameters following a Generalised AutoRegressive Score \citep{creetal2013} structure.


The chief goal of this paper is to introduce and explore a novel methodology for time series of rankings inspired by the Generalised AutoRegressive Conditional Heteroskedasticity (GARCH) models \citep{eng1982,bol1986}. Our modelling is built upon the Mallows model, which implicitly depends on distance. More specifically, we propose a reparameterisation of the Mallows distribution in terms of the mean parameter. Then, the temporal dynamic is defined by the conditional distribution of the current ranking given the past rankings, which is assumed to follow a reparameterised Mallows distribution, where autoregressive and feedback components are incorporated into the model through the conditional expectation of the associated distances. Our approach results in a GARCH-type dynamic for the sequence of distances. We call this novel class of time series by Ranking-GARCH (in short R-GARCH) models.
This formulation allows us to establish desirable stochastic properties for our model such as stationarity and ergodicity, and offers an intuitive dynamic in terms of the distances. In particular, the sample autocorrelation function (ACF) and partial ACF of the distances provide useful information on the lags to be considered in the modelling. Moreover, the ACF can be derived from the well-established results from ARMA processes since the distance process admits an ARMA representation. Another important contribution of this paper is the development of a Monte Carlo EM algorithm to deal with incomplete rankings.

The paper unfolds as follows. In Section \ref{sec:R-GARCH}, we introduce some needed notation, define our class of Ranking GARCH models, and derive some statistical properties. Section \ref{sec:mle} is devoted to model inference when the data is fully observed, while Section \ref{sec:incomplete} handles inference when missing data occurs. Simulated results are also presented to explore the finite-sample performance of the proposed estimators under both non-missing and missing data scenarios. We apply the proposed R-GARCH models to analyse the weekly ranking of professional tennis players from 2015 to 2019 in Section \ref{sec:app}. Concluding remarks are discussed in Section \ref{sec:conclusion}.

\section{Ranking GARCH Models}\label{sec:R-GARCH}

In this section, our proposed class of ranking time series models is defined and illustrated. As introduced previously, a complete ordering of an item set $\{1,\ldots,k\}$ is a point in $\mathcal{P}_k$, the space of permutations of $k$ labels. Observations in $\mathcal{P}_k$ are denoted by $\pi=(\pi(1),\ldots,\pi(k))$, where $\pi(i)$ is the rank given to item $i$, for $i=1,\ldots,k$. The inverse operation, $\pi^{-1}(i)$ is also well defined and returns the rank of item $i$. 

Construction of the R-GARCH approach will be based on the model by \cite{mal1957}, which is defined via the probability function
\begin{eqnarray}\label{mallows}
p(\pi|\theta)=\exp\{-\theta d(\pi,\pi_0)\}/\psi(\theta),\quad \pi\in\mathcal{P}_k,\quad \theta\in\mathbb R,
\end{eqnarray}
where $\pi_0$ is a fixed ranking known as the \textit{population consensus}, and $\theta$ is a concentration parameter. For $\pi,\pi_0\in\mathcal{P}_k$, $d(\cdot,\cdot)$ is a distance on $\mathcal{P}_k\times\mathcal{P}_k$, so $d(\pi,\pi_0)\geq0$ with equality holding if and only if $\pi=\pi_0$. For $\theta>0$, $\pi_0$ is the modal ranking. The limit cases $\theta\rightarrow0^+$ and $\theta\rightarrow\infty$ give us the uniform distribution (on $\mathcal{P}_k$) and the degenerate distribution at $\pi_0$, respectively. When $\theta<0$, $\pi_0$ is an antimode. We only consider the case $\theta>0$ in this paper, which also ensures model identifiability. Also, we focus on distances that are right-invariant \citep{cri1985} as often done in the literature. Mathematically, $d(\cdot,\cdot)$ is right-invariant if it satisfies $d(\pi,\pi_0)=d(\pi\circ\tau,\pi_0\circ\tau)$, $\forall\pi,\pi_0,\tau\in\mathcal P_k$, where $\pi\circ\tau$ denotes the composition of permutations $\pi$ and $\tau$ in the same space. In other words, under right-invariance, the distance is not affected by item-relabelling. A ranking $\pi$ following the Mallows model (\ref{mallows}) is denoted by $\pi\sim\mbox{Mallows}(\pi_0, \theta)$, which implicitly depends on the distance $d(\cdot,\cdot)$.

The choice of distance is a crucial point concerning Mallows model tractability. This is because the availability of an analytical form for the model's normalisation constant given by $\psi(\theta) = \sum_{\pi \in \mathcal{P}_k} \exp\{-\theta d(\pi,\pi_0)\}$ depends on $d(\cdot,\cdot)$. The majority of applications of the Mallows model focus on the Kendall and Hamming distances because closed-form solutions for $\psi(\theta)$ have been derived in the seminal work by \cite{fliver1986}. More recently in \cite{iru2018}, this has been achieved for the Cayley distance based on a result provided by \cite{fliver1986}. However, the derivation of these closed-form expressions is non-trivial. For instance, the Spearman and Footrule distances are often of interest but are choices that make (\ref{mallows}) an intractable likelihood (except for trivially small $k$). In this situation, the model inference is challenging and requires the use of $\psi(\theta)$ approximations or specialised methods. 

We now consider the Mallows model given in Equation (\ref{mallows}) reparameterised in terms of the mean $\mu=E(\pi)\equiv g(\theta)$, which will be the basis for our time series construction. We denote this reparameterised version as $\pi\sim\mbox{Mallows}(\mu,\pi_0)$. Let $\{\pi_t\}_{t\in\mathbb Z}$ be a sequence of random rankings on $\mathcal{P}_k$, with $\pi_t$ denoting a ranking of the item set at time $t$. To specify a GARCH-type probability model for $\pi_t$ at time $t$, we let $\pi_{t-1}$ to be the population consensus, and consider a reparametrisation of (\ref{mallows}) in terms of the distance mean $\mu_t$ conditional on the past rankings, which induces a dynamics for the spread parameter, now depending on $t$, say $\theta_t$. Evidently, $\mu_t$ is a function of the previous rank $\pi_{t-1}$ and the unknown spread parameter $\theta_t$. We denote this as $\mu_t = g(\pi_{t-1}, \theta_t)$, or simply $\mu_t = g(\theta_t)$ for right-invariant distances. Naturally, the form of $g(\cdot)$ is dictated by the type of distance, which we approach next. 
With these ingredients, a formal definition of our ranking time series models inspired by the GARCH processes is presented below. 

\begin{definition}[R-GARCH models]\label{def:R-GARCH}
Let $\{\pi_t\}_{t\in\mathbb Z}$ be a time series of rankings and $\mathcal F_{t-1}$ be the sigma-algebra generated by $\pi_{t-1},\pi_{t-2}\ldots$, for $t\in\mathbb Z$. We say that $\{\pi_t\}_{t\in\mathbb Z}$ is a Ranking-GARCH process if satisfies 
\begin{eqnarray}\label{temporal_dep}
\pi_t|\mathcal F_{t-1}&\sim&\mbox{Mallows}(\mu_t,\pi_{t-1}),\nonumber\\
\mu_t&\equiv&E\left(d(\pi_t,\pi_{t-1})|\mathcal F_{t-1}\right)=\phi_0+{\sum_{i=1}^p\phi_id(\pi_{t-i},\pi_{t-i-1})}+\sum_{j=1}^q\alpha_j\mu_{t-j},
\end{eqnarray}
for $t\in\mathbb Z$, with $p,q\in\mathbb N_0\equiv\{0,1,2,\ldots\}$, where $\phi_0>0$ is an intercept,  $\phi_1,\ldots,\phi_p\geq0$ are autoregressive parameters, and $\alpha_1,\ldots,\alpha_q\geq0$ are parameters related to effect of past conditional distance means. We denote $\{\pi_t\}_{t\in\mathbb Z}\sim\mbox{R-GARCH(p,q)}$.
\end{definition}\label{def:rgarch}

For a Ranking-GARCH process $\{\pi_t\}_{t\in\mathbb Z}$, $\mu_t$ is the conditional mean function of the distance $d(\pi_t,\pi_{t-1})$ given $\mathcal F_{t-1}$, which is time-varying and depends on the parameter vector $\boldsymbol\beta=(\phi_0, \boldsymbol{\phi}^\top, \boldsymbol{\alpha}^\top)^\top$, where $\boldsymbol{\phi} \equiv (\phi_1, \ldots, \phi_p)^\top$ and $\boldsymbol{\alpha} \equiv (\alpha_1, \ldots, \alpha_q)^\top$. The conditional probability of $\pi_t$ given $\mathcal{F}_{t-1}$, say $p(\pi_t|\mathcal{F}_{t-1})$ assumes the form (\ref{mallows}), which implicitly depends on $d(\cdot,\cdot)$, and with $\theta_t = g^{-1}(\mu_t)$ instead of $\theta$.

To explicitly compute the conditional probability function involved in Definition \ref{def:R-GARCH} in terms of $\mu_t$, it is necessary to compute $\theta_t=g^{-1}(\mu_t)$. Therefore, having a closed-form expression for the expected value of the distance plays is crucial for this task. For the R-GARCH model with Kendall distance, such a function $g(\cdot)$ is known and given in Eq. (\ref{eq:exp_kendall}) from Remark \ref{rem:kendall}. Although our simulation experiments will focus on this specification, the R-GARCH model is tractable for any $d(\cdot, \cdot)$ where $\psi(\theta)$ and $g(\theta)$ are available. In Remark \ref{rem:hamming}, we present explicit quantities under the Hamming distance (mean given in Eq. (\ref{eq:hamming})), which will also be considered in our data application.

\begin{remark}\label{rem:kendall}
For the numerical experiments in this paper, we will consider the Kendall distance. In this case, closed forms for moments of $d(\cdot, \cdot)$ and the normalisation constant $\psi(\theta)$ can be explicitly found in the literature. More specifically, for 
$\pi \sim \mbox{Mallows}(\pi_0,\theta)$ (classic parameterisation) with the Kendall distance, the expected value and variance of $d(\pi, \pi_0)$ are given respectively by
\begin{eqnarray}\label{eq:exp_kendall}
\mu\equiv g(\theta)\equiv E(d(\pi, \pi_0)) = \frac{k e^{-\theta}}{1-e^{-\theta}} - \sum_{j=1}^k \frac{j e^{-j\theta}}{1- e^{-j\theta}}
\end{eqnarray}
and 
$$\mbox{Var}(d(\pi, \pi_0)) = \frac{k e^{-\theta}}{ (1-e^{-\theta})^2} - \sum_{j=1}^k \frac{j^2 e^{-j\theta}}{(1- e^{-j\theta})^2},$$
for any modal ranking $\pi_0$; for instance, see \cite{fliver1986}. Also, the model's normalisation constant has the analytical form:
$$\psi(\theta) =\frac{\prod_{l=2}^{k}(1 - e^{-\theta l})}{(1-e^{-\theta})^{k-1}},$$
which does not depend on $\pi_0$ due to right-invariance of $d(\cdot, \cdot)$.
\end{remark}

\begin{remark}\label{rem:hamming}
For $\pi \sim \mbox{Mallows}(\pi_0, \theta)$ with the Hamming distance and by using the results on moment generating function provided by \cite{fliver1986}, we obtain that the expected value and variance of $d(\pi, \pi_0)$ are given respectively by
\begin{eqnarray}\label{eq:hamming}
\mu\equiv g(\theta)\equiv E(d(\pi, \pi_0)) = k-e^\theta\dfrac{A(k-1,\theta)}{A(k,\theta)}
\end{eqnarray}
and 
$$\mbox{Var}(d(\pi, \pi_0)) =e^\theta\dfrac{A(k-1,\theta)}{A(k,\theta)}+e^{2\theta}\left\{\dfrac{A(k-2,\theta)}{A(k,\theta)}+\left(\dfrac{A(k-1,\theta)}{A(k,\theta)}\right)^2\right\},$$
for any modal ranking $\pi_0$, where we have defined $A(k,\theta)\equiv\displaystyle\sum_{j=0}^k \frac{(e^\theta -1)^j}{j!}$. The analytical form of $\psi(\theta)$ in this case is
 $$\psi(\theta) = k! e^{-k\theta} A(k,\theta)$$
for any $\pi_0$ due to right-invariance of the Hamming $d(\cdot, \cdot)$, with $A(\cdot,\cdot)$ as defined above.

\end{remark}

With the given results in Remark \ref{rem:kendall} or Remark \ref{rem:hamming}, inverting the function $g(\cdot)$ requires finding the solution of $\mu_t = g(\theta_t)$ with respect to $\theta_t$. As this depends on the GARCH parameters, we adopt the notation $\theta_t(\phi_0, \boldsymbol{\phi}, \boldsymbol{\alpha})$. 
Finding $\theta_t(\phi_0, \boldsymbol{\phi}, \boldsymbol{\alpha})$ can be done as described in Algorithm \ref{alg:solve_theta} or with any other root-finding routines. Our pseudo-code provides a solution that defines a target function, which returns the squared difference between $g(\theta_t)$ and $\mu_t$. The latter is then minimised with respect to $\theta_t$ which can easily be accomplished with the \texttt{optim} routine in \texttt{R}. The algorithm is adapted to distances other than the Kendall by changing line 4 from Algorithm \ref{alg:solve_theta} to the appropriate $g(\cdot)$ form. 

\begin{algorithm}[ht!]
	\SetAlgoLined
	\SetKwInput{KwInput}{Input}                
	\SetKwInput{KwOutput}{Output}              
 	\SetKwFunction{FMain}{Main}
	\SetKwFunction{FSum}{Sum}
	\SetKwFunction{FTarget}{Target}
	
	\KwInput{ $\boldsymbol{\underline{\pi}}$, $\phi_0$, $\boldsymbol{\phi}$, $\boldsymbol{\alpha}$ }
	
	\For{$t \in (\max\{p,q\}+1):n$}{

	Compute $\mu_t = \phi_0 +\sum_{i=1}^p\phi_jd(\pi_{t-i},\pi_{t-i-1})+\sum_{j=1}^q\alpha_j\mu_{t-j}$

    	{\texttt{Target}{($\theta_t, \mu_t, k$}):}{ \textcolor{Cerulean}{// Define \texttt{Target} function}
		
		 $E = \frac{k e^{-\theta_t}}{1-e^{-\theta_t}} - \sum_{j=1}^k \frac{j e^{-j\theta_t}}{1- e^{-j\theta_t}}$;
 	
 	 \KwRet $(E-\mu_t)^2$
		
	}

	$\theta_t(\phi_0, \boldsymbol{\phi}, \boldsymbol{\alpha}) \leftarrow$ Minimize \FTarget{$\theta_t, \mu_t, k$} with respect to $\theta_t$ for the given $\mu_t$ and $k$;
 }

 	\KwOutput{${\theta}_{\max\{p,q\}+1}(\phi_0, \boldsymbol{\phi}, \boldsymbol{\alpha}), \ldots, \theta_n(\phi_0, \boldsymbol{\phi}, \boldsymbol{\alpha})$}

\caption{Algorithm to compute $\theta_t(\phi_0, \boldsymbol{\phi}, \boldsymbol{\alpha})$ for all $t$ under the Kendall distance.}\label{alg:solve_theta}
\end{algorithm}

We conclude this subsection by giving some remarks on the R-GARCH model for clarity of exposure. 

\begin{remark}\,\\
\vspace{-.8cm}
\begin{enumerate}[(i)]
    \item At time $t$, the ranking ($\pi_t$) has its temporal dependence driven by Equation (\ref{temporal_dep}), which states a GARCH-type structure for the sequence of distances $d_t\equiv d(\pi_t,\pi_{t-1})$, for $t\in\mathbb Z$. The conditional expectation of $d_t$ given $\mathcal F_{t-1}$ ($\mu_t$) depends on the previous distances $d_{t-1}\ldots,d_{t-p}$, which plays the rule of autoregressive (AR) components with respective AR coefficients $\phi_1,\ldots,\phi_p$. The feedbacks $\mu_{t-1},\ldots,\mu_{t-q}$ are also included to explain $\mu_t$ with respective coefficients $\alpha_1,\ldots,\alpha_q$. At this point, it is important to highlight that we are modelling the conditional mean of $d_t$ given $\mathcal F_{t-1}$ rather than the conditional variance considered in the well-established GARCH models. Note that the conditional variance is also modelled under our methodology since it depends on $\mu_t$; therefore, R-GARCH models have joint conditional mean/variance dynamics. Our approach resembles the integer-valued GARCH models (for instance, see \cite{feretal2006}), which are designed to handle count time series data and consider a time-varying conditional mean.
    
    \item For $p=q=0$, the spread parameter is a constant function of $\phi_0$. In this situation, there is no temporal dependence to be modelled and one should consider the classic Mallows model (assuming independence among the rankings) with mode $\pi_0$ and parameter $\theta = g^{-1}(\phi_0)$.  

    \item Although the R-GARCH model is presented in detail for Kendall and Hamming cases, its formulation for other distances $d(\cdot,\cdot)$ follows straightforwardly if the functions $g(\cdot)$ and $\psi(\cdot)$ (mean and normalisation constant) are available in closed form. Naturally, working with distances under which $\psi(\theta)$ and distribution moments are not analytical presents additional challenges. However this is not exclusive to the R-GARCH models, but common to any Mallows-type model involving them.   
\end{enumerate}    
\end{remark}

\subsection{Data Generation}

Simulated trajectories of the proposed model are explored in this section, which focuses on investigating the effects of the R-GARCH parameters. To this end, we can avail of data generation routines implemented for the Mallows model in the \texttt{R} package \texttt{BayesMallows} \citep{bayesmallows}. Our starting scenario considers the ranking of $k=10$ objects over $n=1000$ time points (sample size). We consider $p=1$ with $\phi_1 = 0.3$ and $q=0$. Different values of $\phi_0 = (3, 5, 7)$ are included initially so the conditional mean structure becomes $\mu_t = \phi_0 + \phi_1 d(\pi_{t-1}, \pi_{t-2})$. 

Three trajectories are simulated using the different $\phi_0$ values and are illustrated next. In Figure \ref{fig:sim1}, the conditional mean process over time is illustrated on the right, with different colors indicating the $\phi_0$ used for simulation. Naturally, higher values of $\phi_0$ produce higher conditional means, which indicates that larger deviations from $\pi_t$ to $\pi_{t-1}$ are supported. 
On the left-side panel of Figure \ref{fig:sim1}, histograms show the empirical distribution of the lag-one distances, i.e. $d(\pi_{t-1}, \pi_{t-2})$ for all $t$, obtained in each configuration. The histograms evidence how higher distance values are supported when increasing $\phi_0$, while small intercepts encourage small lag-one deviations.

\begin{figure}[ht!]
    \centering
    \includegraphics[width = 0.85\linewidth]{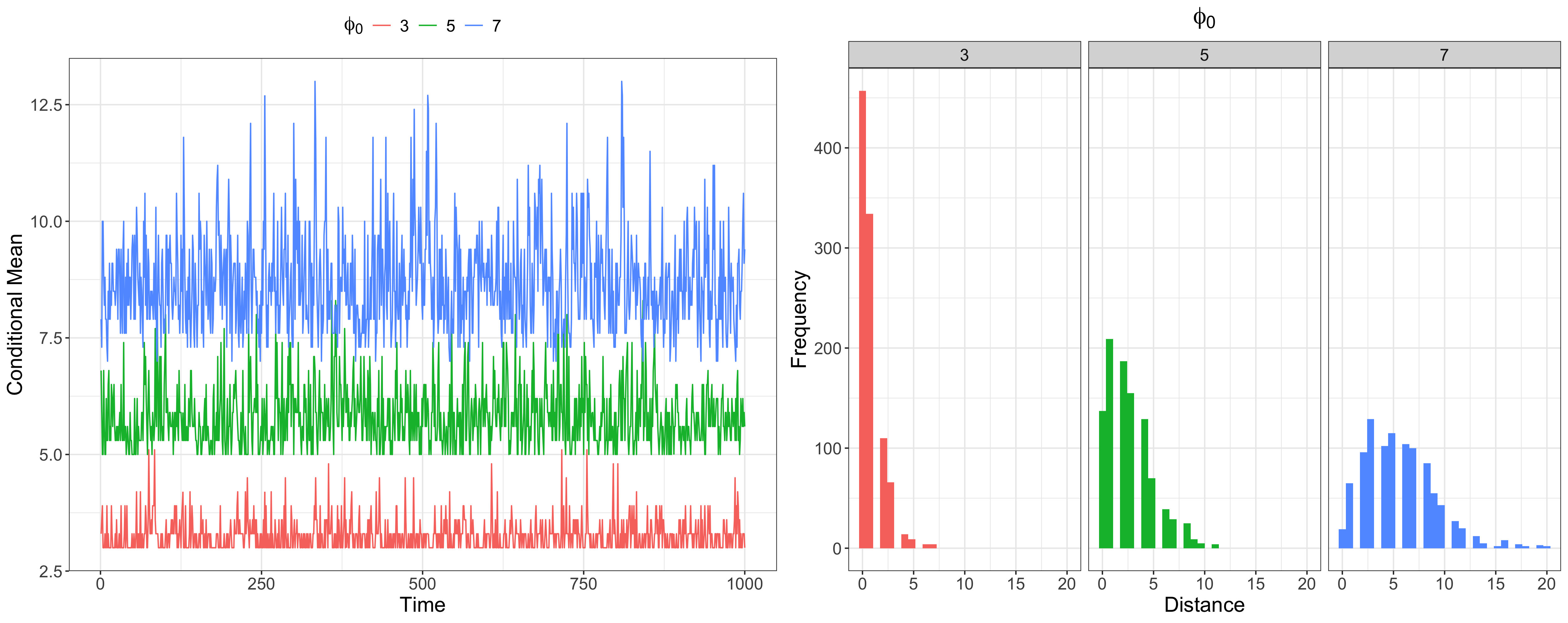}
    \caption{Conditional mean trajectories of three simulated ranking time series with $k=10$ items and length $n=1000$. Highest values of the intercept $\phi_0$ model larger values of $\mu_t=E\left(d(\pi_t,\pi_{t-1})|\mathcal F_{t-1}\right)$, for $t=2,\ldots,1000$. This is further illustrated with the histograms on the right, which show the frequency of $d(\pi_t,\pi_{t-1})$ observed under the trajectories obtained with $\phi_0 = 3,5,7$.}
    \label{fig:sim1}
\end{figure}

An alternative visualisation of the same simulated data is provided in Figure \ref{fig:orderings_in_time}. To produce this figure, rankings are converted to orderings so $\boldsymbol{\underline{\pi}}^{-1} \equiv (\pi^{-1}_1, \ldots, \pi^{-1}_k)$ is obtained. This allows us to explore the trajectory of rank assignments given to a fixed object over time. Figure \ref{fig:orderings_in_time} provides a visualisation of $\pi^{-1}_t(i)$ versus $t$ for all items $i=1, \ldots,10$, with $i$ indicated in each row. Repeating this exercise for the three simulated datasets results in the columns of Figure \ref{fig:orderings_in_time}, which vary in $\phi_0$ value. The alternative visualisation of the ranking time series provided in this figure highlights how smaller $\phi_0$ encourage less variation in the rankings received by each item over time. 

\begin{figure}[ht!]
    \centering
    \includegraphics[width = 0.95\linewidth]{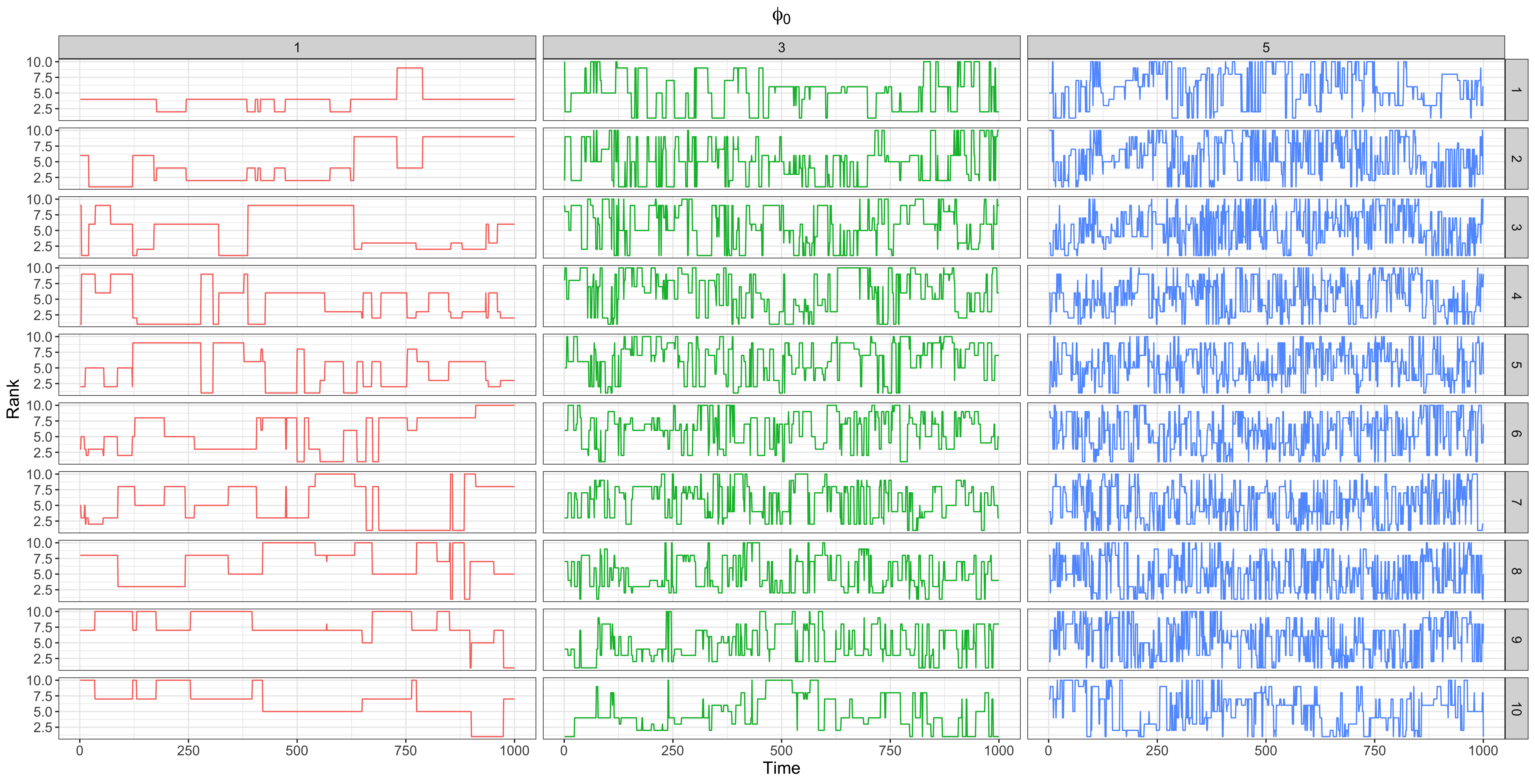}
    \caption{Time series of orderings given to items $i = 1, \ldots, 10$ (rows) in the trajectories simulated with $k=10, n=1000$, $\phi_1 = 0.3$ and $\phi_0 = 1, 3, 5$.}\label{fig:orderings_in_time}
\end{figure}

Now that the role of $\phi_0$ in R-GARCH models is well understood, further settings aim at exploring the parameters that control the time dynamics. We continue studying the ten-item setting and compare R-GARCH models that employ $p=1, q=0$ and $p=1, q=1$. The conditional conditional expectation structure is now $\mu_t = \phi_0 + \phi_1 d(\pi_{t-1}, \pi_{t-2}) + \alpha_1 \mu_{t-1}$ and parameter values are $\phi_0 = 3, \phi_1 = 0.3$ and $\alpha_1 =0$ or $\alpha_1 = 0.3$.  In this study, one thousand trajectories are generated from each R-GARCH(1,0) ($p=1, q=0$) and R-GARCH(1,1) ($p=1, q=1$) configurations.

The empirical autocorrelation of lags one to ten of the conditional mean process $\{\mu_t \}_{t=2}^n$ is computed and stored for each trajectory. Figure \ref{fig:acf_models} illustrates the results with boxplots, with lags indicated in the horizontal axis and configurations represented with different colors. With the addition of the moving average term, the ACF is increased when compared to the case $q=0$ as expected. By looking at Figure \ref{fig:acf_models}, the behaviour of the ACFs resembles that of ARMA models with exponential decay. In the next subsection, we show that the sequence of distances admits an ARMA representation, which theoretically justifies this statement.


\begin{figure}[ht!]
    \centering
    \includegraphics[width = 0.8\linewidth]{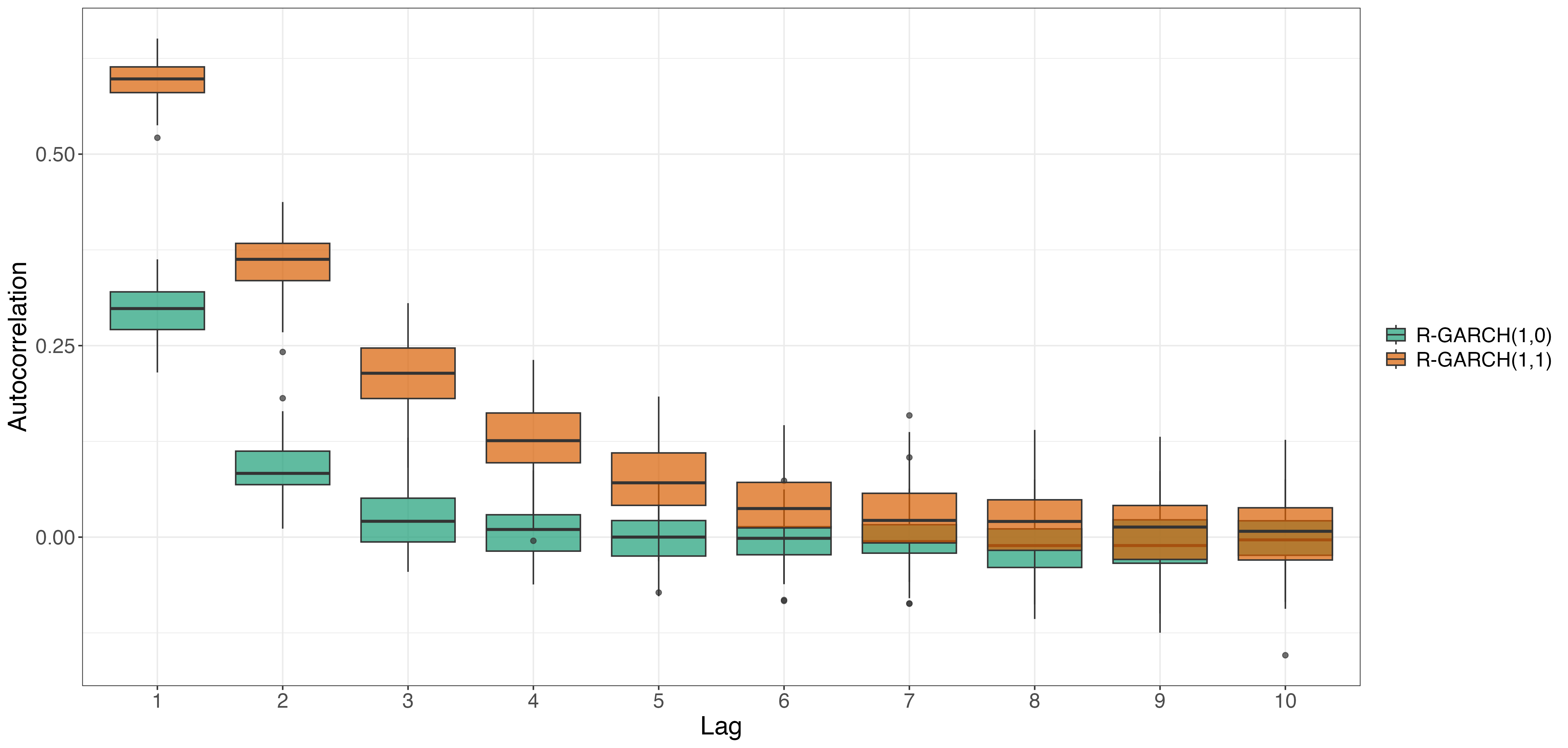}
    \caption{Boxplots of empirical autocorrelation of lags one to ten from one hundred R-GARCH(1,0) and R-GARCH(1,1) simulated trajectories.} \label{fig:acf_models}
\end{figure}

\subsection{Stationarity and Ergodicity}

In this subsection, we establish stationarity and ergodicity properties for the sequence of distances $d_t\equiv d(\pi_t,\pi_{t-1})$, for $t\in\mathbb Z$.

\begin{theorem}\label{thm:meanstat}
The sequence of distances $\{d_t\}_{t\in\mathbb Z}$ is first-order stationary if $\displaystyle\sum_{i=1}^p\phi_i+\displaystyle\sum_{j=1}^q\alpha_j<1$.
\end{theorem}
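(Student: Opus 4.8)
The plan is to exploit the martingale-difference structure implicit in Definition \ref{def:R-GARCH} together with the finiteness of $\mathcal P_k$, which removes any integrability concern. First I would set $\nu_t\equiv E(d_t)$ and note that, since $\mathcal P_k$ is finite, $d_t=d(\pi_t,\pi_{t-1})$ takes values in the bounded set $[0,d_{\max}]$ with $d_{\max}\equiv\max_{\pi,\sigma\in\mathcal P_k}d(\pi,\sigma)<\infty$; hence $\nu_t$ exists and $\sup_{t\in\mathbb Z}|\nu_t|\le d_{\max}$ with no extra hypothesis. This uniform boundedness is precisely the feature that will later pin down the mean, and it is what distinguishes the ranking setting from classical GARCH stationarity arguments, where finiteness of moments must be imposed.

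Next I would take unconditional expectations in the conditional-mean recursion of \eqref{temporal_dep}. By the tower property, $E(d_t)=E\big(E(d_t\mid\mathcal F_{t-1})\big)=E(\mu_t)$, and likewise $E(\mu_{t-j})=E(d_{t-j})=\nu_{t-j}$. Applying $E(\cdot)$ to the defining equation for $\mu_t$ therefore yields the deterministic linear difference equation
\begin{eqnarray*}
\nu_t=\phi_0+\sum_{i=1}^p\phi_i\,\nu_{t-i}+\sum_{j=1}^q\alpha_j\,\nu_{t-j},\qquad t\in\mathbb Z.
\end{eqnarray*}
Writing $r=\max\{p,q\}$ and $c_i=\phi_i+\alpha_i$ (with $\phi_i=0$ for $i>p$ and $\alpha_j=0$ for $j>q$), this collapses to $\nu_t=\phi_0+\sum_{i=1}^r c_i\,\nu_{t-i}$, whose associated characteristic polynomial is $P(z)=z^r-\sum_{i=1}^r c_i z^{r-i}$. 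This is the mean analogue of the ARMA representation alluded to earlier in the section.

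The crux of the argument is a root-location step. Because $c_i\ge0$ and $\sum_{i=1}^r c_i=\sum_{i=1}^p\phi_i+\sum_{j=1}^q\alpha_j<1$, I would show every root of $P$ lies strictly inside the unit disk: if $|z|\ge1$ solved $P(z)=0$ then $|z|^r\le\sum_{i=1}^r c_i|z|^{r-i}\le|z|^r\sum_{i=1}^r c_i<|z|^r$, a contradiction, where the middle inequality uses $|z|^{-i}\le1$ and the last uses $\sum_i c_i<1$. Consequently the homogeneous solutions of the recursion are spanned by terms $z_k^t$ with $|z_k|<1$, all of which diverge as $t\to-\infty$. A constant particular solution is $m=\phi_0\big/\big(1-\sum_{i=1}^p\phi_i-\sum_{j=1}^q\alpha_j\big)$, well defined and strictly positive exactly under the stated condition. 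Since $\{\nu_t\}$ is bounded over all of $\mathbb Z$, the homogeneous component must vanish, forcing $\nu_t\equiv m$; hence $E(d_t)=m$ for every $t$ and the distance sequence is first-order stationary.

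I expect the main obstacle to be this final step—arguing that the bounded two-sided solution is necessarily the constant one—since one-sided decay alone does not suffice. The resolution is that $\sum_i c_i<1$ forces all characteristic roots \emph{strictly} inside the unit circle, so any nonzero homogeneous contribution blows up as $t\to-\infty$ and is incompatible with the uniform boundedness established at the outset. The non-negativity of $\phi_i$ and $\alpha_j$ is what keeps the root-location estimate elementary, sidestepping the delicate Schur--Cohn conditions that signed coefficients would require.
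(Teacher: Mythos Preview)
Your proposal is correct and follows essentially the same approach as the paper: take unconditional expectations in \eqref{temporal_dep} via the tower property to obtain the linear difference equation $\nu_t=\phi_0+\sum_{i=1}^{\max(p,q)}(\phi_i+\alpha_i)\nu_{t-i}$, and then use the root-location condition (equivalently, that the associated polynomial has no zeros on or inside the unit circle in the backshift formulation) together with non-negativity of the coefficients to conclude the mean is constant. Your explicit appeal to the boundedness of $d_t$ on the finite space $\mathcal P_k$ to rule out any nonzero homogeneous component is a nice touch that the paper leaves implicit when it simply asserts the difference equation ``has a stable and finite solution that does not depend on $t$''.
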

\begin{proof}
By computing the expectation in both sides of Equation (\ref{temporal_dep}) and by using the notation $d_t\equiv d(\pi_t,\pi_{t-1})$, we have that 
\begin{eqnarray}\label{proof:aux}
E(\mu_t)=E(d_t)=\phi_0+\sum_{i=1}^p\phi_iE(d_{t-i})+\sum_{j=1}^q\alpha_jE(d_{t-j}),
\end{eqnarray}
where we have used the fact that $E(\mu_{t-j})=E[E(d_{t-j}|\mathcal F_{t-j-1})]=E(d_{t-j})$ for $j=1,\ldots,q$. Denote $\lambda_t\equiv E(d_t)$ for $t\in\mathbb Z$. Then, the above equation can be expressed as 
$$\lambda_t=\phi_0+\sum_{i=1}^{\max(p,q)}(\phi_i+\alpha_i)\lambda_{t-i},$$
where $\phi_i\equiv0$ for $i>p$ and $\alpha_j\equiv0$ for $j>q$. In terms of the backshift operator $B$, it follows that
$$\left(1-\sum_{i=1}^{\max(p,q)}(\phi_i+\alpha_i)B^i\right)\lambda_t=\phi_0,$$
which is a homogeneous difference equation. It has a stable and finite solution that does not depend on $t$ if $1-\displaystyle\sum_{i=1}^{\max(p,q)}(\phi_i+\alpha_i)z^i\neq0$ $\,\,\forall |z|\leq1$. Since $\phi_1,\ldots,\phi_p$ and $\alpha_1,\ldots,\alpha_q$ are non-negative, this condition is equivalent to $\displaystyle\sum_{i=1}^{\max(p,q)}(\phi_i+\alpha_i)<1$.
\end{proof}

\begin{corollary}\label{cor:mean}
Under the first-order stationarity of $\{d_t\}_{t\in\mathbb Z}$, we have that 
$$E(d_t)=E(d(\pi_t,\pi_{t-1}))=\dfrac{\phi_0}{1-\displaystyle\sum_{i=1}^p\phi_i-\displaystyle\sum_{j=1}^q\alpha_j}.$$
\end{corollary}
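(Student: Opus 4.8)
The plan is to leverage the first-order difference equation for $\lambda_t \equiv E(d_t)$ already derived in the proof of Theorem \ref{thm:meanstat}. Recall that taking expectations in (\ref{temporal_dep}) and using the tower property $E(\mu_{t-j}) = E(d_{t-j})$ produced the recursion
$$\lambda_t = \phi_0 + \sum_{i=1}^{\max(p,q)}(\phi_i + \alpha_i)\lambda_{t-i},$$
with the conventions $\phi_i \equiv 0$ for $i > p$ and $\alpha_j \equiv 0$ for $j > q$. First I would invoke the standing hypothesis of first-order stationarity, which guarantees that $\lambda_t$ is a finite constant independent of $t$; write $\lambda_t = \lambda$ for all $t \in \mathbb{Z}$.

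Substituting $\lambda_t = \lambda_{t-i} = \lambda$ into the recursion collapses it to the single algebraic identity $\lambda = \phi_0 + \lambda \sum_{i=1}^{\max(p,q)}(\phi_i + \alpha_i)$. Solving for $\lambda$, and using the conventions above to rewrite $\sum_{i=1}^{\max(p,q)}(\phi_i + \alpha_i) = \sum_{i=1}^p \phi_i + \sum_{j=1}^q \alpha_j$, immediately yields the claimed expression
$$E(d_t) = \lambda = \frac{\phi_0}{1 - \displaystyle\sum_{i=1}^p \phi_i - \displaystyle\sum_{j=1}^q \alpha_j}.$$

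The only point requiring attention, and the closest thing to an obstacle (though a mild one), is to confirm that the division is legitimate, i.e. that the denominator does not vanish. This is immediate from the stationarity condition of Theorem \ref{thm:meanstat}, namely $\sum_{i=1}^p \phi_i + \sum_{j=1}^q \alpha_j < 1$, which renders the denominator strictly positive and hence $\lambda$ well-defined and positive, as it must be since it is the mean of a nonnegative distance. No further computation is needed: the corollary is simply the constant (stationary) solution of the homogeneous difference equation appearing in the proof of Theorem \ref{thm:meanstat}.
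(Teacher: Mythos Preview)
Your argument is correct and follows the same approach as the paper: invoke the recursion for $E(d_t)$ established in Theorem~\ref{thm:meanstat}, use first-order stationarity to replace all $\lambda_{t-i}$ by a common constant $\lambda$, and solve. Your explicit check that the denominator is strictly positive is a welcome addition but not a departure from the paper's method.
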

\begin{proof}
The result is immediately obtained by using Equation (\ref{proof:aux}) and the fact that $E(d_t)=E(d_{t-s})$ for all $s\in\mathbb Z$ under first-order stationarity.
\end{proof}

The next result states that the first-order stationary condition is also enough for the second-order stationary of the sequence of distances.

\begin{theorem}\label{thm:covstat}
A necessary and sufficient condition for $\{d_t\}_{t\in\mathbb Z}$ to be second-order stationary is that $\displaystyle\sum_{i=1}^p\phi_i+\displaystyle\sum_{j=1}^q\alpha_j<1$.
\end{theorem}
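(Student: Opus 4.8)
The plan is to recast the distance process as a linear, ARMA-type recursion driven by a martingale-difference innovation, prove necessity directly from the mean recursion, and obtain sufficiency from the location of the roots of the associated autoregressive polynomial. Writing $\epsilon_t\equiv d_t-\mu_t$, the specification (\ref{temporal_dep}) gives $E(\epsilon_t\mid\mathcal F_{t-1})=0$, so $\{\epsilon_t\}_{t\in\mathbb Z}$ is a martingale difference sequence; in particular $E(\epsilon_t)=0$ and $E(\epsilon_t\epsilon_s)=0$ for $s\neq t$. Substituting $\mu_t=d_t-\epsilon_t$ and $\mu_{t-j}=d_{t-j}-\epsilon_{t-j}$ into (\ref{temporal_dep}) and collecting terms yields the representation
\begin{equation*}
d_t=\phi_0+\sum_{i=1}^{\max(p,q)}(\phi_i+\alpha_i)\,d_{t-i}+\epsilon_t-\sum_{j=1}^{q}\alpha_j\,\epsilon_{t-j},
\end{equation*}
with the conventions $\phi_i\equiv0$ for $i>p$ and $\alpha_j\equiv0$ for $j>q$. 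This is the central device: its autoregressive polynomial is exactly the one appearing in the proof of Theorem~\ref{thm:meanstat}.

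For necessity, I would take unconditional expectations in (\ref{temporal_dep}) as in (\ref{proof:aux}), which is legitimate because $\mathcal P_k$ is finite and hence $d_t$ is bounded with finite moments of all orders. If $\{d_t\}$ is second-order stationary then $E(d_t)=\lambda$ for a finite constant $\lambda$, so $\lambda=\phi_0+\left(\sum_{i}\phi_i+\sum_j\alpha_j\right)\lambda$. Since $d_t\geq0$ forces $\lambda\geq0$ while $\phi_0>0$, the identity $\lambda\left(1-\sum_i\phi_i-\sum_j\alpha_j\right)=\phi_0>0$ rules out $\sum_i\phi_i+\sum_j\alpha_j\geq1$, so the condition is necessary.

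For sufficiency, the condition $\sum_i\phi_i+\sum_j\alpha_j<1$ places all roots of $1-\sum_{i=1}^{\max(p,q)}(\phi_i+\alpha_i)z^i$ strictly outside the closed unit disk (already verified in Theorem~\ref{thm:meanstat}), so the recursion inverts into a causal, geometrically decaying linear representation $d_t=\lambda+\sum_{k\geq0}\psi_k\,\epsilon_{t-k}$ with $\sum_k\psi_k^2<\infty$. Because the innovations are uncorrelated, the autocovariances reduce to $\mathrm{Cov}(d_t,d_{t-h})=\sum_{l\geq0}\psi_{h+l}\psi_l\,\mathrm{Var}(\epsilon_{t-h-l})$, so second-order stationarity will follow once $\mathrm{Var}(\epsilon_t)$ is shown to be constant in $t$.

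The main obstacle is precisely this last point: unlike a classical ARMA model, the innovations are conditionally heteroskedastic, with $\mathrm{Var}(\epsilon_t)=E[\mathrm{Var}(d_t\mid\mathcal F_{t-1})]$, and the conditional variance is a nonlinear function of $\mu_t$ through $\theta_t=g^{-1}(\mu_t)$. First-order stationarity controls $E(\mu_t)$ but not $E[\mathrm{Var}(d_t\mid\mathcal F_{t-1})]$, so constancy of the innovation variance cannot be read off from the mean alone. I would close this gap by exhibiting $\{d_t\}$ as the unique strictly stationary and ergodic solution of the contracting stochastic recursion induced by (\ref{temporal_dep}), the contraction constant being $\sum_i\phi_i+\sum_j\alpha_j<1$ and finiteness of all moments being guaranteed by boundedness of distances on $\mathcal P_k$; strict stationarity then turns $\{\epsilon_t\}$ into genuine stationary white noise, making $\mathrm{Var}(\epsilon_t)=\sigma^2$ constant and the above autocovariances depend only on $h$. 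A more self-contained alternative is to write the finite linear system satisfied by the time-indexed second moments and argue that the associated operator is a contraction under the same condition, so its unique bounded solution must be the time-invariant one.
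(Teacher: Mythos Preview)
Your approach is the same as the paper's: derive the ARMA$(\max(p,q),q)$ representation driven by the martingale-difference sequence $\epsilon_t=d_t-\mu_t$, obtain necessity from the mean recursion and positivity of $\phi_0$, and obtain sufficiency from the root condition on the autoregressive polynomial $1-\sum_{i}(\phi_i+\alpha_i)z^i$. The one substantive difference is that the paper simply invokes ``well-established ARMA results'' once the MDS property is in hand, without pausing over the heteroskedasticity issue you flag; it does not verify that $\mathrm{Var}(\epsilon_t)$ is constant in $t$, effectively treating the martingale-difference structure alone as enough to import classical ARMA stationarity. Your proposed remedies---either strict stationarity from a contracting stochastic recursion with boundedness of distances on $\mathcal P_k$ supplying all moments, or a direct fixed-point argument for the second-moment system---are therefore more rigorous than what the paper actually writes down, though the paper does not acknowledge this as a gap.
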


\begin{proof}
To show the sufficiency, we follow the strategy by \cite{rydshe1999}, where they showed that the temporal dynamic of a Poisson time series model (INGARCH) satisfies an ARMA representation.

Define $\epsilon_t\equiv d_t-\mu_t$, for $t\in\mathbb Z$. The sequence of distances $d_t$ satisfies the following equations:
\begin{eqnarray}\label{proof:arma}
d_t&=&\mu_t+(d_t-\mu_t)=\mu_t+\epsilon_t\nonumber\\
   &=&\phi_0+\sum_{i=1}^p\phi_id_{t-i}+\sum_{j=1}^q\alpha_j\mu_{t-j}+\epsilon_t\nonumber\\
   &=&\phi_0+\sum_{i=1}^p\phi_id_{t-i}+\sum_{j=1}^q\alpha_j(d_{t-j}-\epsilon_{t-j})+\epsilon_t\nonumber\\
   &=&\phi_0+\sum_{i=1}^{\max(p,q)}(\phi_i+\alpha_i)d_{t-i}-\sum_{j=1}^q\alpha_j\epsilon_{t-j}+\epsilon_t,
\end{eqnarray}
for $t\in\mathbb Z$, where $\phi_i\equiv0$ for $i>p$ and $\alpha_j\equiv0$ for $j>q$ as before. We will now argue that $\{\epsilon_t\}_{t\in\mathbb Z}$ is a martingale difference sequence (MDS). First, from Theorem \ref{thm:meanstat} and Corollary \ref{cor:mean}, we have that $\epsilon_t$ is integrable under the condition $\displaystyle\sum_{i=1}^p\phi_i+\displaystyle\sum_{j=1}^q\alpha_j<1$. More specifically, $E(|\epsilon_t|)=E(|d_t-\mu_t|)\leq E(d_t)+E(\mu_t)=2E(d_t)=\dfrac{2\phi_0}{1-\displaystyle\sum_{i=1}^p\phi_i-\displaystyle\sum_{j=1}^q\alpha_j}<\infty$. Further, $E(\epsilon_t|\mathcal F_{t-1})=E(d_t|\mathcal F_{t-1})-E(\mu_t|\mathcal F_{t-1})=\mu_t-\mu_t=0$, almost surely. Therefore, $\{\epsilon_t\}_{t\in\mathbb Z}$ is indeed an MDS with respect to $\{\mathcal F_{t-1}\}_{t\in\mathbb Z}$. This fact and Equation (\ref{proof:arma}) give us that $\{d_t\}_{t\in\mathbb Z}\sim\mbox{ARMA}(\max(p,q),q)$. Hence, it follows from the well-established ARMA results in the literature that $\{d_t\}_{t\in\mathbb Z}$ is second-order stationary if all roots of the AR polynomial $\Phi(z)=1-\displaystyle\sum_{i=1}^{\max(p,q)}(\phi_i+\alpha_i)z^i$ lie outside the unit circle, which along with the fact that the coefficients are all non-negative, it is equivalent to $\displaystyle\sum_{i=1}^{\max(p,q)}(\phi_i+\alpha_i)<1$. Finally, the necessity comes from the first-order stationarity, which is valid under the assumption of the theorem. Under first-order stationarity, the mean of $d_t$ given in Corollary \ref{cor:mean} is necessarily positive since $d_t>0$. This immediately gives us that $\displaystyle\sum_{i=1}^p\phi_i+\displaystyle\sum_{j=1}^q\alpha_j<1$ is necessary.
\end{proof}

\begin{remark}
Due to representation (\ref{proof:arma}), the variance and autocovariance functions of $\{d_t\}_{t\in\mathbb Z}$ can be directly obtained from well-established results from ARMA processes provided in time series textbooks (e.g. \cite{brodav1991}). Hence, in particular, for an R-GARCH(1,1) model, we have that
\begin{eqnarray*}
\gamma(0)\equiv\mbox{Var}(d_t)=\sigma_\epsilon^2\dfrac{1-2\phi_1\alpha_1-\alpha_1^2}{1-(\phi_1+\alpha_1)^2}
\end{eqnarray*}
and
\begin{eqnarray*}
\rho(h)\equiv\mbox{corr}(d_{t+h},d_t)=\dfrac{\phi_1[1-\alpha_1(\phi_1+\alpha_1)]}{1-2\phi_1\alpha_1-\alpha_1^2}(\phi_1+\alpha_1)^{h-1},\quad\mbox{for}\quad h\geq1.
\end{eqnarray*}
where $\sigma_\epsilon^2\equiv\mbox{Var}(\epsilon_t)$.
\end{remark}

We end this subsection by establishing the ergodicity of the sequence of distances for an R-GARCH(1,1) model.

\begin{theorem}\label{thm:erg}
    Let $\{d_t\}_{t\in\mathbb Z}$ be the sequence of distances associated with an R-GARCH(1,1) model. Then, for $\phi_1+\alpha_1<1$, $\{d_t\}_{t\in\mathbb Z}$ is geometric ergodic.
\end{theorem}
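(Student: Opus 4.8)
The plan is to recast the problem as a genuine Markov chain and then apply the drift-plus-minorization machinery for geometric ergodicity. First I would observe that, although $\{d_t\}$ is not itself Markov, the pair $X_t \equiv (d_t,\mu_t)$ is a homogeneous Markov chain. Indeed, by right-invariance the conditional law of $d_t = d(\pi_t,\pi_{t-1})$ given $\mathcal F_{t-1}$ depends on the past only through $\mu_t$ (equivalently through $\theta_t = g^{-1}(\mu_t)$), while $\mu_{t+1}=\phi_0+\phi_1 d_t+\alpha_1\mu_t$ is a deterministic affine function of $X_t$. Thus the chain may be written as an iterated random function $\mu_{t+1}=F_{d_t}(\mu_t)$ with $F_d(x)=\phi_0+\phi_1 d+\alpha_1 x$, where $d_t$ is drawn from the Mallows-distance law of mean $\mu_t$. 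A key simplification is that the distance is bounded, $0\le d_t\le d_{\max}$ (with $d_{\max}=\binom{k}{2}$ for Kendall), so the interval $I=[\phi_0/(1-\alpha_1),\,(\phi_0+\phi_1 d_{\max})/(1-\alpha_1)]$ is forward-invariant; the chain therefore lives on a compact state space, and since $P(d_t=\cdot\mid\mu_t)$ varies continuously with $\mu_t$, the chain is (weak) Feller.

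Next I would verify a geometric drift (Foster--Lyapunov) condition. Using $E(d_t\mid\mu_t)=\mu_t$, the test function $V(\mu)=1+\mu$ gives
\[
E\big(V(\mu_{t+1})\mid \mu_t\big)=1+\phi_0+(\phi_1+\alpha_1)\mu_t=(\phi_1+\alpha_1)\,V(\mu_t)+\big[1+\phi_0-(\phi_1+\alpha_1)\big],
\]
so that under the hypothesis $\phi_1+\alpha_1<1$ one has a contraction with rate $\lambda=\phi_1+\alpha_1<1$. This is exactly the geometric drift toward the centre of $I$ required by standard Markov-chain theory (Meyn and Tweedie), and it also pins down the geometric rate.

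It then remains to supply irreducibility, aperiodicity, and a small (minorization) set, after which geometric ergodicity of $\{X_t\}$ follows from the drift condition together with petiteness of compact sets. The device I would use is a \emph{run of zeros}: since $\theta_t$ ranges over a compact set as $\mu_t$ ranges over $I$, the probability $P(d_t=0\mid\mu_t)=1/\psi(\theta_t)$ is bounded below by some $\delta>0$ uniformly on $I$, and each zero applies the contraction $x\mapsto\phi_0+\alpha_1 x$ whose fixed point is $\mu_\ast=\phi_0/(1-\alpha_1)$. After $m$ consecutive zeros (an event of probability at least $\delta^m>0$) the state satisfies $|\mu_{t+m}-\mu_\ast|\le\alpha_1^m\,\mathrm{diam}(I)$, so every trajectory can be driven into an arbitrarily small neighbourhood of $\mu_\ast$ with uniformly positive probability. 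This simultaneously yields $\psi$-irreducibility (with $\mu_\ast$ a reachable point), aperiodicity (the chain can dwell near $\mu_\ast$ for arbitrarily many steps), and, combined with the Feller property, the minorization that makes compact neighbourhoods of $\mu_\ast$ small sets. Geometric ergodicity of the bivariate chain $\{(d_t,\mu_t)\}$ then transfers to the marginal sequence $\{d_t\}$, since $d_t$ is a coordinate of a geometrically ergodic chain.

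The step I expect to be the main obstacle is precisely the irreducibility/minorization, since the invariant set of the iterated-function system may be a complicated (possibly Cantor-like) subset of $I$ with no obvious atom, so the usual route through an accessible point fails. The contraction-toward-$\mu_\ast$ argument above is the crux that circumvents this: it exploits the uniform lower bound on the probability of a zero distance together with the common Lipschitz constant $\alpha_1<1$ of all the affine maps to establish reachability of $\mu_\ast$ and the attendant small-set structure. The drift and boundedness are then routine by comparison.
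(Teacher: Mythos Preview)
Your proposal is correct, and its backbone is the same Foster--Lyapunov drift the paper uses: both take $V(\mu)=1+\mu$ and compute $E(V(\mu_{t+1})\mid\mu_t=\mu)=1+\phi_0+(\phi_1+\alpha_1)\mu$, obtaining contraction at rate $\phi_1+\alpha_1<1$. The paper presents exactly this computation as ``Tweedie's criterion'' and then stops, citing Meyn and Tweedie.

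Your treatment is more careful on two points the paper leaves implicit or loose. First, you correctly identify the Markov object as $\{\mu_t\}$ (equivalently the pair $\{(d_t,\mu_t)\}$) rather than $\{d_t\}$; the paper simply asserts ``Markovianity of $d_t$'' in passing. Second, and more substantively, you supply the irreducibility, aperiodicity and minorization via the run-of-zeros contraction toward $\mu_\ast=\phi_0/(1-\alpha_1)$, which the paper does not address at all. You also observe that the state space for $\mu_t$ is compact (since $d_t\le d_{\max}$), a fact the paper overlooks---its limiting step $\mu\to\infty$ is in that sense vacuous, and once compactness, the Feller property and a uniform minorization are in hand, the drift inequality is nearly redundant for geometric ergodicity. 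What your additional work buys is a genuinely self-contained argument; what the paper's brevity buys is a short verification of the drift, at the cost of leaving the structural conditions (which you rightly flag as the main obstacle) unstated.
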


\begin{proof}
We will show that Tweedie's criterion \citep{meytwe1993} is satisfied for the test function $V(x)=1+x$, for $x>0$. This result and the Markovianity of $d_t$ (under the R-GARCH(1,1) model framework) give us the desirable geometric ergodicity.

It follows that
\begin{eqnarray*}
E(V(\mu_t)|\mu_{t-1}=\mu)&=&1+E(\mu_t|\mu_{t-1}=\mu)\\
&=&1+E(\phi_0+\phi_1 d +\alpha_1\mu)\\
&=&\dfrac{1+\phi_0+\phi_1 E(d) +\alpha_1\mu}{1+\mu}V(\mu).
\end{eqnarray*}
where $d$ is the distance associated with the Mallows model. By definition (model parameterisation), we have that $E(d)=\mu$. Hence, $\dfrac{1+\phi_0+\phi_1 E(d) +\alpha_1\mu}{1+\mu}=\dfrac{1+\phi_0+\mu(\phi_1  +\alpha_1)}{1+\mu}\rightarrow\phi_1+\alpha_1<1$ as $\mu\rightarrow\infty$, where we have used the hypothesis of the theorem that $\phi_1+\alpha_1<1$. Further, note that $\mu$ is bounded away from zero since the smallest reachable point of $\mu_t$ is $\mu_*=\dfrac{\alpha_0}{1-\alpha_1}$ (fix point of the skeleton $\mu_t=\alpha_0+\alpha_1\mu_{t-1}$). Therefore, there exist constants $\kappa_1\in(0,1)$, $\kappa_2>0$ and $\bar\mu>\mu_*$ such that
\begin{eqnarray}\label{tweediecriterion}
E(V(\mu_t)|\mu_{t-1}=\mu)\leq (1-\kappa_1)V(\mu)+\kappa_2 I\{\mu\in(\mu_*,\bar\mu)\},
\end{eqnarray}
where $I\{\cdot\}$ is the indicator function, and the existence of $\kappa_2$ follows from the fact that the function $1+\phi_0+\mu(\phi_1 +\alpha_1)$ is bounded on the interval $(\mu_*,\bar\mu)$. Since Tweedie's criterion (\ref{tweediecriterion}) is satisfied, the geometric ergodicity of $\{d_t\}_{t\in\mathbb Z}$ holds.
\end{proof}


The following two sections are concerned with parameter estimation, where the main aim is to learn $\phi_0, \boldsymbol{\phi}, \boldsymbol{\alpha}$ from observed ranking time series. Section \ref{sec:mle} is devoted to model inference when $\underline{\boldsymbol\pi}$ are complete ranks, while the missing value case is covered in Section \ref{sec:incomplete}.


\section{Inference Under Complete Rankings}\label{sec:mle}

Let $\pi_1,\ldots,\pi_n$ be an observed time series of rankings from an $\mbox{R-GARCH}(p,q)$ model as given in Definition \ref{def:R-GARCH}. Let $\boldsymbol\beta=(\phi_0,\boldsymbol\phi^\top,\boldsymbol\alpha^\top)^\top$ be the parameter vector. The log-likelihood function $\ell(\boldsymbol\beta)$ assumes the form
\begin{eqnarray}\label{eq:loglik}
\ell(\boldsymbol\beta)=\sum_{t=m+1}^n\ell_t(\boldsymbol\beta)\equiv\sum_{t=m+1}^n\{-\theta_td_t-\log\psi(\theta_t)\}=-\sum_{t=m+1}^n\{g^{-1}(\mu_t)d_t+\log\psi(g^{-1}(\mu_t))\},
\end{eqnarray}
where $m=\max(p,q)$, $d_t\equiv d(\pi_t,\pi_{t-1})$, and $\mu_t$ as in Equation (\ref{temporal_dep}). 


Fitting the R-GARCH model to observed complete ranking can be straightforwardly done via conditional maximum likelihood estimation. This consists of optimising the log-likelihood function with respect to $\boldsymbol\beta$, that is, the conditional maximum likelihood estimator (CMLE) of $\boldsymbol\beta$ is given by $\widehat{\boldsymbol\beta}=\mbox{argmax}_{\boldsymbol\beta} \ell(\boldsymbol\beta)$. Standard maximisation routines such as \texttt{R}'s \texttt{optim} can be employed on this task, producing parameter estimates at a cheap computational cost.  

The score function associated with $\ell(\boldsymbol\beta)$, say ${\bf U}(\boldsymbol\beta)=\partial\ell(\boldsymbol\beta)/\partial\boldsymbol\beta$, has components given by
\begin{eqnarray*}
\dfrac{\partial\ell(\boldsymbol\beta)}{\partial\beta_j}=\sum_{t=m+1}^n\dfrac{\partial\ell_t(\boldsymbol\beta)}{\partial\beta_j}=-\sum_{t=m+1}^n\left\{d_t+\dfrac{\psi'(\theta_t)}{\psi(\theta_t)}\right\}\dfrac{\partial\theta_t}{\partial\mu_t}\dfrac{\partial\mu_t}{\partial\beta_j},
\end{eqnarray*}
for $j=1,\ldots,p+q+1$, where $\theta_t=g^{-1}(\mu_t)$ and $\psi'(\theta)=d\psi(\theta)/d\theta$. By using that $\dfrac{\partial\theta_t}{\partial\mu_t}=\dfrac{1}{g'\left(g^{-1}(\mu_t)\right)}=\dfrac{1}{g'(\theta_t)}$ and $\dfrac{d\log\psi(\theta_t)}{d\theta_t}=E(d_t|\mathcal F_{t-1})=\mu_t$, we obtain that the score function can be expressed by
\begin{eqnarray}\label{score}
\dfrac{\partial\ell(\boldsymbol\beta)}{\partial\beta_j}=-\sum_{t=m+1}^n\dfrac{d_t-\mu_t}{g'(\theta_t)}\dfrac{\partial\mu_t}{\partial\beta_j},\quad j=1,\ldots,p+q+1.
\end{eqnarray}

The score function (\ref{score}) admits a simple matrix representation: ${\bf U}(\boldsymbol\beta)=-{\bf X}\boldsymbol D({\bf d}-{\boldsymbol\mu})$, where ${\bf X}$ is a matrix $p+q+1\times n-m$ with elements $\dfrac{\partial\mu_t}{\partial\beta_j}$, ${\bf D}=\mbox{diag}\{1/g'(\theta_{m+1}),\ldots,1/g'(\theta_n)\}$, ${\bf d}=(d_{m+1},\ldots,d_n)^\top$ and $\boldsymbol\mu=(\mu_{m+1},\ldots,\mu_n)^\top$. Furthermore, this analytical form can be provided in the optimisation to improve the performance of \texttt{optim} in \texttt{R}.

We now discuss obtaining the standard errors of the maximum likelihood estimates. First, we argue that our R-GARCH models satisfy the second Bartlett identity $E\left(\dfrac{\partial\ell_t(\boldsymbol\beta)}{\partial\boldsymbol\beta}\dfrac{\partial\ell_t(\boldsymbol\beta)}{\partial\boldsymbol\beta^\top}\bigg|\mathcal F_{t-1}\right)=-E\left(\dfrac{\partial^2\ell_t(\boldsymbol\beta)}{\partial\boldsymbol\beta\partial\boldsymbol\beta^\top}\bigg|\mathcal F_{t-1}\right)$.
For $k,l=1,\ldots,p+q+1$, we have that 
\begin{eqnarray}\label{eq:information}
-E\left(\dfrac{\partial^2\ell_t(\boldsymbol\beta)}{\partial\beta_k\partial\beta_l}\bigg|\mathcal F_{t-1}\right)=\dfrac{1}{\left(g'(\theta_t)\right)^2}\dfrac{d^2\log\psi(\theta_t)}{d\theta_t^2}\dfrac{\partial\mu_t}{\partial\beta_k}\dfrac{\partial\mu_t}{\partial\beta_l}=\dfrac{1}{g'(\theta_t)}\dfrac{\partial\mu_t}{\partial\beta_k}\dfrac{\partial\mu_t}{\partial\beta_l},
\end{eqnarray}
where we have used the fact $\mu_t=E(d_t|\mathcal F_{t-1})$ and $\dfrac{d^2\log\psi(\theta_t)}{d\theta_t^2}=g'(\theta_t)$, the last one following from exponential family properties. By using that $\mbox{Var}(d_t|\mathcal F_{t-1})=g'(\theta_t)$, we obtain that 
\begin{eqnarray*}
E\left(\dfrac{\partial\ell_t(\boldsymbol\beta)}{\partial\beta_k}\dfrac{\partial\ell_t(\boldsymbol\beta)}{\partial\beta_l}\bigg|\mathcal F_{t-1}\right)=\dfrac{1}{\left(g'(\theta_t)\right)^2}\mbox{Var}(d_t|\mathcal F_{t-1})\dfrac{\partial\mu_t}{\partial\beta_k}\dfrac{\partial\mu_t}{\partial\beta_l}=\dfrac{1}{g'(\theta_t)}\dfrac{\partial\mu_t}{\partial\beta_k}\dfrac{\partial\mu_t}{\partial\beta_l},
\end{eqnarray*}
for $k,l=1,\ldots,p+q+1$, which is equal to (\ref{eq:information}), and therefore it confirms our claim about the second Bartlett identity. The total information matrix, say ${\bf K}_n(\boldsymbol\beta)$, can be expressed as
\begin{eqnarray*}
{\bf K}_n(\boldsymbol\beta)=\sum_{t=m+1}^nE\left(-\dfrac{\partial^2\ell_t(\boldsymbol\beta)}{\partial\boldsymbol\beta\partial\boldsymbol\beta^\top}\bigg|\mathcal F_{t-1}\right)={\bf X}{\bf D}{\bf X}^\top,
\end{eqnarray*}
where the matrices ${\bf X}$ and ${\bf D}$ are defined below Equation (\ref{score}).

Under stationarity and ergodicity of $\{d_t\}_{t\in\mathbb Z}$, the asymptotic normality of the MLEs can be established: $\sqrt{n}({\widehat{\boldsymbol\beta}}-\boldsymbol\beta)\stackrel{d}{\longrightarrow} N({\boldsymbol0},\boldsymbol\Sigma)$ as $n\rightarrow\infty$, where $\boldsymbol\Sigma=\mbox{plim}_{n\rightarrow\infty}n^{-1}{\bf K}_n(\boldsymbol\beta)$, with $\mbox{plim}$ denoting the limit in probability. Note that we formally found the condition to ensure stationarity and ergodicity of the distance sequence under an R-GARCH(1,1) model in Theorem \ref{thm:erg}. The standard errors of the maximum likelihood estimates can be obtained by the diagonal elements of the inverse of the total information matrix, that is ${\bf K}_n^{-1}(\boldsymbol\beta)=({\bf X}{\bf D}{\bf X}^\top)^{-1}$.

To evaluate the finite sample performance of the R-GARCH maximum likelihood estimators (MLEs) $\widehat{\boldsymbol{\beta}} \equiv (\widehat{\phi}_0, \widehat{\boldsymbol{\phi}}^\top, \widehat{\boldsymbol{\alpha}}^\top)^\top$, some simulation experiments are carried out. To this end, R-GARCH trajectories are generated under different parameter configurations and sample sizes. Our first experiment takes $k=10, \phi_0 = 1, \phi_1 = 0.4$ and $n=200,500$. With each $n$, one thousand trajectories are simulated, and $\boldsymbol{\widehat{\beta}}$ is computed. Results are summarised in Figure \ref{fig_mles_sim1} and Table \ref{tab_mles_sim1}. On the right, boxplots illustrate the distribution of $\boldsymbol{\widehat{\beta}} = (\widehat{\phi}_0, \widehat{\phi}_1)^\top$ by sample size, with vertical dashed lines indicating the true parameter values used to generate the data. As expected, estimates are centered at the true values and variability decreases with the increase in sample size. A numerical summary of the same experiment is displayed in Table \ref{tab_mles_sim1} where the Monte Carlo estimate, standard deviation, and mean squared error (MSE) are enclosed. 

\begin{figure}[!htb]
    \centering
    \begin{minipage}{0.55\textwidth}
        \centering
        \includegraphics[width = 1\linewidth]{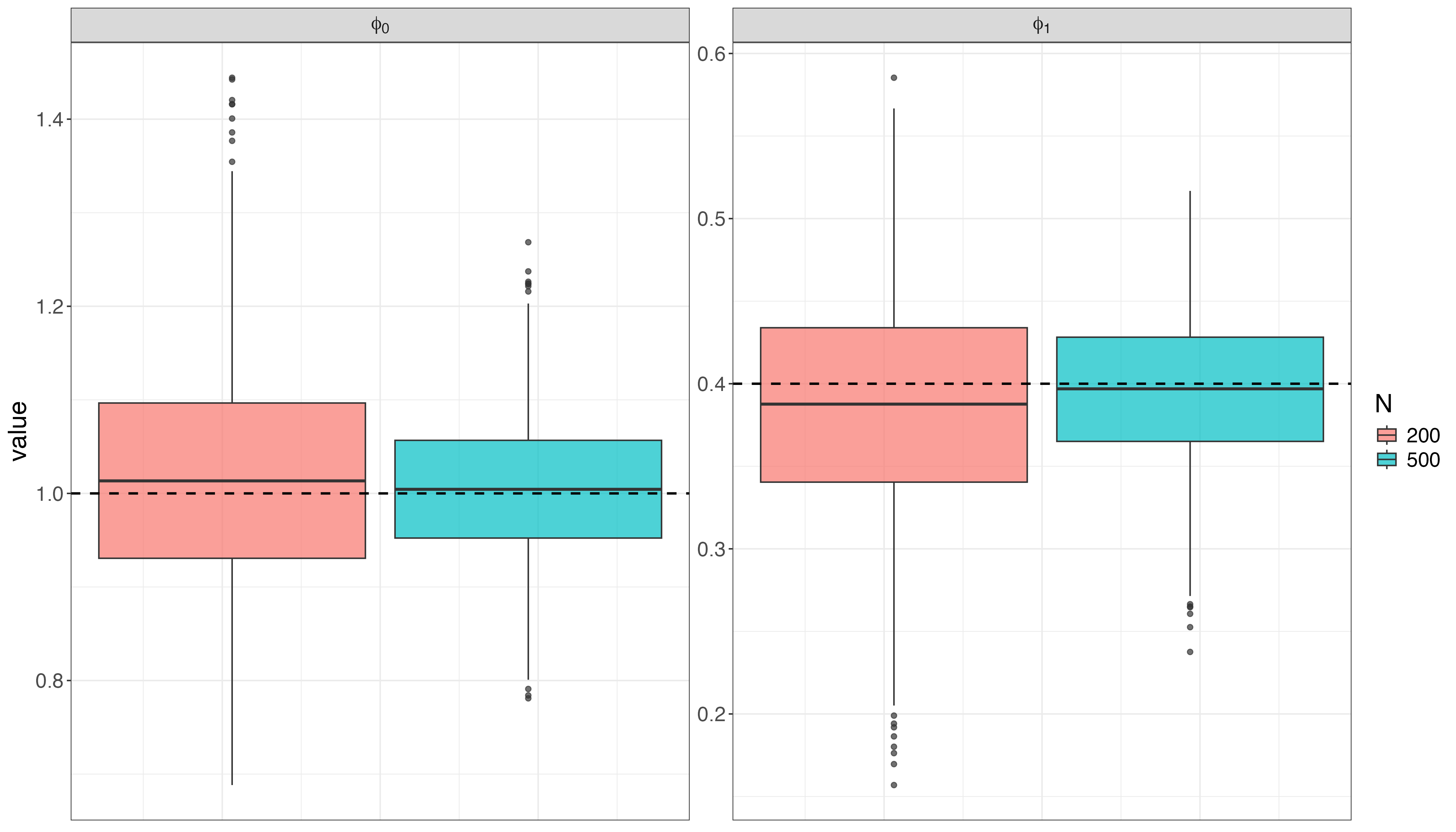}
             \captionof{figure}{Boxplots of maximum likelihood estimates fitted to one thousand R-GARCH data sets simulated with $k=10, \phi_0 = 1, \phi_1 = 0.4$. This is repeated considering the sample sizes $n=200$ and $n=500$ which are shown in different colors. True parameter values are indicated by vertical dashed lines.}\label{fig_mles_sim1}
    \end{minipage}\hfill
     \begin{minipage}{.4\textwidth}
        \centering
        \small
        \renewcommand{\arraystretch}{1}
\begin{tabular}{@{}lllll@{}}
\toprule
                          & \textbf{$n$} & \textbf{Mean} & \textbf{SD} & \textbf{MSE} \\ \midrule
\multirow{2}{*}{$\phi_0 =1$} & 200        & 1.014         & 0.125       & 0.016        \\
                          & 500        & 1.005         & 0.079       & 0.006        \\
\multirow{2}{*}{$\phi_1 = 0.4$}   & 200        & 0.387         & 0.070       & 0.005        \\
                          & 500        & 0.396         & 0.047       & 0.002        \\ \midrule 
\end{tabular}
     \captionof{table}{Monte Carlo mean, standard deviation, and Mean Squared Error (MSE) associated with the first simulation experiment.}\label{tab_mles_sim1}
    \end{minipage}
\end{figure}

A second simulation setting is performed with a larger item set $k=20$, including AR and feedback effects in the model. Parameter values are set to $\phi_0= 3, \phi_1 = 0.2, \alpha_1 = 0.3$ and the same sample sizes $n= 200, 500$ and Monte Carlo replications are explored. Results are illustrated with boxplots in Figure \ref{fig_mles_sim2}, and numerical summaries can be found in Table \ref{tab_mles_sim2}.

 As before, all model parameters are well estimated and there is a reduction in the standard errors and MSE as $n$ increases.  

\begin{figure}[!htb]
    \centering
    \begin{minipage}{0.55\textwidth}
        \centering
        \includegraphics[width = 1\linewidth]{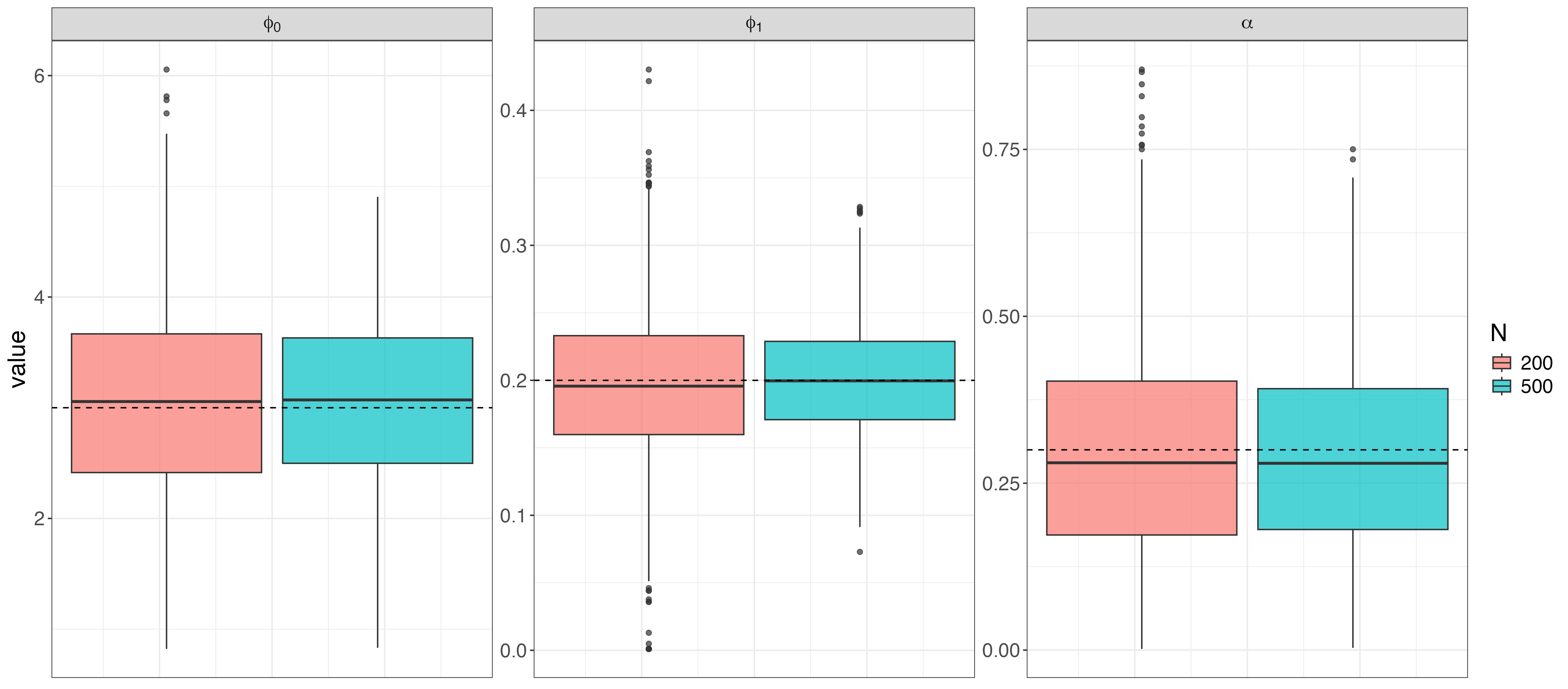}
        \captionof{figure}{Boxplots of maximum likelihood estimates fitted to one thousand R-GARCH data sets simulated with $k=20, \phi_0 = 3, \phi_1 = 0.2$ and $\alpha_1 = 0.3$. This is repeated considering the sample sizes $n=200$ and $n=500$ which are shown in different colors. Vertical dashed lines indicate true parameter values.}\label{fig_mles_sim2}
    \end{minipage}\hfill
     \begin{minipage}{.4\textwidth}
        \centering
        \small
        \renewcommand{\arraystretch}{1}
\begin{tabular}{@{}lllll@{}}
\toprule
                          & \textbf{$n$} & \textbf{Mean} & \textbf{SD} & \textbf{MSE} \\ \midrule
\multirow{2}{*}{$\phi_0=3$} & 200        & 3.063         & 0.875       &  0.768       \\
                          & 500        & 3.067         & 0.782       &   0.615      \\
\multirow{2}{*}{$\phi_1 = 0.2$}   & 200        & 0.196         & 0.058       &   0.003      \\
                          & 500        & 0.200         & 0.043       &     0.002    \\ 

\multirow{2}{*}{$\alpha_1 = 0.3$}   & 200        & 0.294         & 0.160       &    0.026     \\
                          & 500        & 0.288         & 0.143       &     0.020    \\ \midrule 
\end{tabular}
          \captionof{table}{Monte Carlo mean, standard deviation, and Mean Squared Error (MSE) associated with the second simulation experiment.}\label{tab_mles_sim2}
              \end{minipage}
\end{figure}

The studies presented in this section provide evidence of adequate performance of the conditional maximum likelihood estimation for obtaining point estimates of the R-GARCH parameters under the scenarios considered. The results validate the proposed inferential procedure for complete rankings, showing that the CMLEs perform adequately in practice.

\section{Estimation Under Incomplete Rankings}\label{sec:incomplete}

In many situations, the observed data $\pi_1, \ldots, \pi_n$ may involve incompleteness in a way that some (or all) $\pi_i$ do not rank the entire item set. For example, in top$-l$ elicitation, judges choose their top $l$ favorite alternatives, so positions $\pi(l+1), \ldots, \pi(k)$ are not observed. It is also common to evaluate choices by presenting a judge with pairs $\{i, j\}, i \neq j$, where $i, j \in \{1, \ldots, k\}$. In this setting, the pairwise preference is $i \succ j$ or $j \succ i$ but most often, one does not make all pairwise choices necessary to induce a full ranking. For instance, let $k=3$ and $\mathcal{S}$ denote the set of observed pairwise comparisons. If $\mathcal{S} = \{ 1 \succ 2$, $1 \succ 3$\} is collected, there are two complete rankings that are compatible with it, these being $\pi = (1,2,3)$ and $\pi = (1,3,2)$. In what follows, the notation $\tilde{\pi}$ shall indicate that $\pi$ is incomplete and $R(\tilde{\pi})$ be the set of compatible complete ranks ($R(\tilde{\pi}) = \{(1,2,3), (1,3,2)\}$ in the example above).

The most common approach to handle incompleteness in Mallows-type models is to assume that there exists a latent complete rank associated with $\tilde{\pi}$. The latter is assumed to be a random variable following some prior distribution that is consistent with the observed piece $\tilde{\pi}$. In the simplest case, ${\pi}$ follows an Uniform distribution over $R(\tilde{\pi})$ and $p({\pi}_j) = 1/n_{mis}(\tilde{\pi}_j)!$, where $n_{mis}(\tilde{\pi}_j)$ is the number of missing entries in $\tilde{\pi}_j$. 

In the R-GARCH context, having incomplete ranks becomes a somewhat more complex problem in comparison to when observations in $\underline{\pi}$ are independent. This is because our model formulation involves the sequence of distances, which cannot be computed if missingness is present. The following subsection is devoted to the challenging problem of performing inference in this scenario. We will refer to the Expectation-Maximisation (EM) algorithm, outlining in detail all ingredients necessary to handle incomplete R-GARCH observations.


\subsection{Monte Carlo EM Algorithm}

The Expectation-Maximisation (EM) algorithm is perhaps the most well-known tool for performing statistical inference in the presence of missing information. Consider an arbitrary observed ranking time series $\underline\pi^{o}$ that contains only partial information, and a joint probability model $p_o(\cdot|\boldsymbol\beta)$ associated with it. Our goal is to estimate the parameter vector $\boldsymbol\beta$, but $p_o(\cdot|\boldsymbol\beta)$ can only be evaluated under complete observations. The unobserved rankings (missing observations) are here denoted by $\widetilde\pi^{u}$, which will play the rule of auxiliary random variables in our EM algorithm. The joint probability function of $\widetilde\pi^{u}$ will be denoted by $p_u(\cdot)$. The distribution of the latent variable(s) ${\pi}_t$ will be assumed to be discrete Uniform over $\mathcal{R}(\widetilde{\pi}^u_t)$, as explained above. Moreover, independence among the missingness will be assumed for simplicity. The complete data is then denoted by $\underline{\pi}=(\underline\pi^{o},\widetilde\pi^{u})$.

For the R-GARCH model, the complete log-likelihood function, say $\ell^c(\boldsymbol\beta)$, assumes the form (\ref{eq:loglik}).
An important remark to make is that computation of $\ell_t(\boldsymbol\beta)$ in (\ref{eq:loglik}) depends on completeness of $\pi_t,\ldots,\pi_{\max(p,q)-1}$. For example, an R-GARCH(2,0) model requires that $\pi_{t-1}, \pi_{t-2}$ and $\pi_{t-3}$ are also complete in addition to $\pi_t$. From Definition \ref{def:rgarch}, it is obvious that if either $\pi_t$ or $\pi_{t-1}$ have missing entries, $\ell_t(\boldsymbol\beta)$ cannot be obtained since $d({\pi}_t, {\pi}_{t-1})$ is unavailable. In addition, computation of $\theta_t$ involves the lagged distances $d({\pi}_{t-i}, {\pi}_{t-i-1})$, so completeness of observations $\pi_{t-2}$ and $\pi_{t-3}$ is also required when $(p,q)=(2,0)$.

We here propose a Monte Carlo EM-type (MCEM) algorithm due to the inability of the computation of both observed log-likelihood function and conditional expectations involved in the EM-algorithm. To handle these situations, the MCEM was first introduced in \cite{wei1990} in the context of genetics. In this version, the $Q$-function to be maximised is approximated using simulations from the conditional distribution. MCEM variants differ in how the simulations are done, with some important implementations being the Stochastic Approximation EM (SAEM), Markov Chain Monte Carlo EM (MCMCEM), and Sequential Monte Carlo EM (SMCEM).

Let us start by describing the Monte Carlo E-step of our algorithm. If $t$-th term of the complete log-likelihood function involves a partial ranking, we complete the ranking by generating samples from the uniform distribution as specified above conditional on the partial ranking and by using $\boldsymbol\beta^{(r-1)}$ as the parameter vector in the generation; the superscript here refers to the $(r-1)$-th step of the MCEM algorithm. We consider $M$ imputations and then replace the $t$-th term of the complete log-likelihood function with the average of the imputed terms, which we denote by $\widetilde\ell_t(\boldsymbol\beta)$. No action is required if the $t$-th term of the complete log-likelihood function does not involve missingness; in this case, $\widetilde\ell_t(\boldsymbol\beta)=\ell_t(\boldsymbol\beta)$. Hence, we define the $Q$-function to be maximised as 
\begin{eqnarray}\label{eq:Q}
Q(\boldsymbol\beta;\boldsymbol\beta^{(r-1)})=\sum_{t=m+1}^n\widetilde\ell_t(\boldsymbol\beta).
\end{eqnarray}

It is important to note that $\widetilde\ell_t(\boldsymbol\beta)$ implicitly depends on $\boldsymbol\beta^{(r-1)}$, which denotes the MCEM estimate of $\boldsymbol\beta$ in the $r$-th loop of the algorithm. The M-step consists of maximising (\ref{eq:Q}) with respect to $\boldsymbol\beta$; the maximiser is denoted by $\boldsymbol\beta^{(r)}$.  At iteration $r$ of the MCEM algorithm, convergence is assessed based on some pre-specified criterion that compares $\boldsymbol\beta^{(r)}$ and $\boldsymbol\beta^{(r-1)}$. 
 
The strategy outlined in our R-GARCH MCEM algorithm is based on the implementations proposed by \cite{casella2001} and \cite{chan1995}. These works deal carefully with the crucial aspects of any MCEM implementation which are the \textit{number of auxiliary samples} $M$ and the \textit{stopping criterion}. Extra care must be taken when setting MCEM stopping rules, in comparison to the standard EM. This is because the algorithm must now account for the fact that there is a Monte Carlo error involved in $Q(\cdot, \cdot)$, which is determined by $M$. To approach this, \cite{casella2001} and \cite{chan1995} suggest to monitor the difference in the log-likelihood along iterations of the algorithm. The idea is that if the MLE has been reached, the change $\Delta \ell(\boldsymbol{\beta}^{(r)}, \boldsymbol{\beta}^{(r-1)}) \equiv \ell(\boldsymbol{\beta}^{(r)}) - \ell(\boldsymbol{\beta}^{(r-1)})$ will vary around zero. Since we are unable to compute the marginal likelihood under missingness, this difference is instead estimated using $\ell^c(\cdot)$. To this end, the auxiliary draws obtained in the E-step are retained and $\Delta \ell(\boldsymbol{\beta}^{(r)}, \boldsymbol{\beta}^{(r-1)})$ is estimated as
\begin{equation}\label{eq:ll_change}
    \widehat{\Delta} \ell^c(\boldsymbol{\beta}^{(r)}, \boldsymbol{\beta}^{(r-1)}) = \frac{1}{M}\sum_{j=1}^M \log p({\underline{\pi}}_j|\boldsymbol{\beta}^{(r)})-
    \frac{1}{M}\sum_{j=1}^M \log p({\underline{\pi}}_j|\boldsymbol{\beta}^{(r-1)}),
\end{equation}
where $\underline{\pi}_1, \ldots, \underline{\pi}_M$ denote $M$ replications of imputed ranking trajectories and $p(\cdot|\boldsymbol{\beta})$ is its associate joint probability function, which depends on $\boldsymbol{\beta}$.

Then, a possible stopping criterion around (\ref{eq:ll_change}) is to assess if a confidence interval of the form $(\eta -L \sigma, \eta + L \sigma)$ contains zero, where $\eta$ and $\sigma^2$ are the mean and variance of $\widehat\Delta \ell^c(\boldsymbol{\gamma}^{(r)}, \boldsymbol{\gamma}^{(r-1)})$, and $L$ is a non-negative constant to be specified. These unknown quantities are obtained by estimating $\eta$ with Equation (\ref{eq:ll_change}), and $\widehat{\sigma}^2$ with the empirical variance of $\dfrac{p({\underline{\pi}}_1|\boldsymbol{\beta}^{(r-1)})}{p({\underline{\pi}}_1|\boldsymbol{\beta}^{(r)})}, \ldots, \dfrac{p({\underline{\pi}}_M|\boldsymbol{\beta}^{(r-1)})}{p({\underline{\pi}}_M|\boldsymbol{\beta}^{(r)})}$.

\cite{chan1995} suggest $L$ be a number such as 4, in such a way that Chebychev's inequality guarantees that with a probability of at least $100(1- 1/L^2)\%$, the true difference lies within the interval. For further details, please refer to Section 2.3 from \cite{chan1995}. Another option is to create the interval based on the Central Limit Theorem and adopt a standard normal quantile in place of $L$.  

Finally, convergence is accepted if the interval contains zero and $\sigma < \epsilon$, guaranteeing that $\widehat{\Delta} \ell^c(\boldsymbol{\beta}^{(r)}, \boldsymbol{\beta}^{(r-1)})$ is estimated with minimum precision $\epsilon>0$. The latter effectively determines the sample size $M$, and the algorithm stops when $\widehat{\Delta} \ell^c(\boldsymbol{\beta}^{(r)}, \boldsymbol{\beta}^{(r-1)})$ is within $2L\sigma$ of zero. To satisfy this condition and optimise efficiency, we adopt the strategy of starting with moderate $M$ and increasing this value along with the algorithm's iterations.

This concludes our algorithm specification, for which pseudo-code is given in Algorithm \ref{alg:mcem}, which iterates between a Monte Carlo E-Step, maximisation of the estimated $Q$-function, and a convergence assessment based on a prespecified precision $\epsilon>0$ and the estimated confidence interval for $\widehat\Delta \ell^c(\boldsymbol{\beta}^{(r)}, \boldsymbol{\beta}^{(r-1)})$.

\begin{algorithm}
\DontPrintSemicolon
\SetAlgoLined
\SetNoFillComment
\LinesNotNumbered 
\KwIn{Observed incomplete data $\tilde{{\underline{\pi}}}$, $M$, $\epsilon$, $L$ and initial guess $\boldsymbol{\beta}^{(0)}$;}

\vspace{0.3cm}

Let $\mathcal{S}({\underline{{\pi}}})$ be the set of rows where $(\pi_t, \ldots,\pi_{t-\max(p,q)-1})$ involves missingness;

$c \leftarrow \mbox{FALSE}$; $r \leftarrow 0$;

\While{$c = \mbox{FALSE}$}{

\tcp*[l]{1. Monte Carlo Expectation Step}

\For{$j \in \mathcal{S}({\underline{\pi}})$}{
Sample $\boldsymbol{\pi}_j \equiv \widetilde{\pi}^1_j, \ldots, \widetilde{\pi}^M_j$ from $p(\pi_j) \equiv \mbox{Uniform}(\mathcal{R}(\tilde{\pi}_j))$;

Store $\boldsymbol{{\pi}}_j$;

}

\textcolor{gray}{// ${\underline{{\pi}}}_1 \equiv (\boldsymbol{\pi}_1^1, \ldots, \boldsymbol{\pi}_N^1), \ldots, {\underline{{\pi}}}_M \equiv (\boldsymbol{\pi}_1^M, \ldots, \boldsymbol{\pi}_N^M)$ complete data sets are obtained;}

\tcp*[l]{2. Maximisation Step}

Maximise $Q(\boldsymbol{\beta}; \boldsymbol{\beta}^{(r-1)})$ given in Eq. (\ref{eq:Q}) with respect to $\boldsymbol{\beta}$;

Set $\boldsymbol{\beta}^{(r)}$ as the solution of this maximization problem;
 
  \vspace{0.2cm}
  
\tcp*[l]{3. Convergence check}

Estimate $\widehat{\eta} = \widehat{\Delta}\ell^c(\boldsymbol{\beta}^{(r)}, \boldsymbol{\beta}^{(r-1)})$ and $\widehat{\sigma}^2$ using ${\underline{{\pi}}}_1, \ldots, {\underline{{\pi}}}_M$;

Compute the interval $\mathcal{I} = (\widehat{\eta} - L \widehat{\sigma}, \widehat{\eta} + L \widehat{\sigma})$;

\uIf{$\sigma \leq \epsilon$ \mbox{and} $0 \in \mathcal{I}$ }{
    $c = \mbox{TRUE}$ \texttt{\textcolor{cyan}{ convergence indicator}} \;
} \ElseIf{$\hat{\sigma}>\epsilon$}{
 $M = M + \lceil M/3 \rceil$ \;
 $r = r +1;$
} \Else{
$r = r +1;$
}

}

\KwOut{$\boldsymbol{\beta}^{(r)}, M, r$; }

\caption{Monte Carlo EM algorithm to fit the R-GARCH model parameters under incomplete ranking trajectories.}\label{alg:mcem}
\end{algorithm}

\subsection{MCEM Simulation Study}

A small simulation experiment is carried out to investigate the performance of the proposed inferential procedure for incomplete rankings. For this purpose, ranking time series data will be simulated from the R-GARCH model with different percentages of missingness. We set the data configuration to $k= 10, n = 500, \phi_0 = 3$ and $\phi_1 = 0.5$. Either 10\% or 30\% of missing entries are then introduced uniformly at random in the simulated data set as follows. First, a row index $t^*$ is drawn from $\{1, \ldots, n\}$. The number of rankings to be deleted from $\pi_{t^*}$ is then simulated between $\{2, \ldots, k/2\}$ and the positions to be removed are chosen at random. The index $t^*$ is excluded from the set of available rows and the procedure continues until the missingness percentage is approximately the target. The number of Monte Carlo replications is 100 per missingness percentage, $M$ is initialised to 200, and $\epsilon$ is set to 0.05.

The empirical density functions of the Monte Carlo parameter estimates are shown in Figure \ref{fig:mcem_experiment}, where different colors indicate 10\% (in red) or 30\% (in blue) missing entries and dashed lines indicate the parameter value used to generate the (complete) data. As expected, uncertainty increases slightly as more entries are removed. This can be seen clearly on the left-hand side, where the density associated with the 10\% case is sharper at the true value of $\phi_0$. In terms of the autoregressive parameter, there is a small downward bias which increases with the missingness percentage. We conjecture that this is due to the independence assumption made on the distribution of the latent complete ranks. Introducing time dependence in the latter would likely help mitigate this, but this would imply a more complicated and computationally expensive E-step. In our view, results are satisfactorily close to the true $\phi_1$, so we choose to avoid this extra hurdle. Anyway, we believe that this point deserves further investigation in a future paper.

\begin{figure}
    \centering
    \includegraphics[width = 0.8\linewidth]{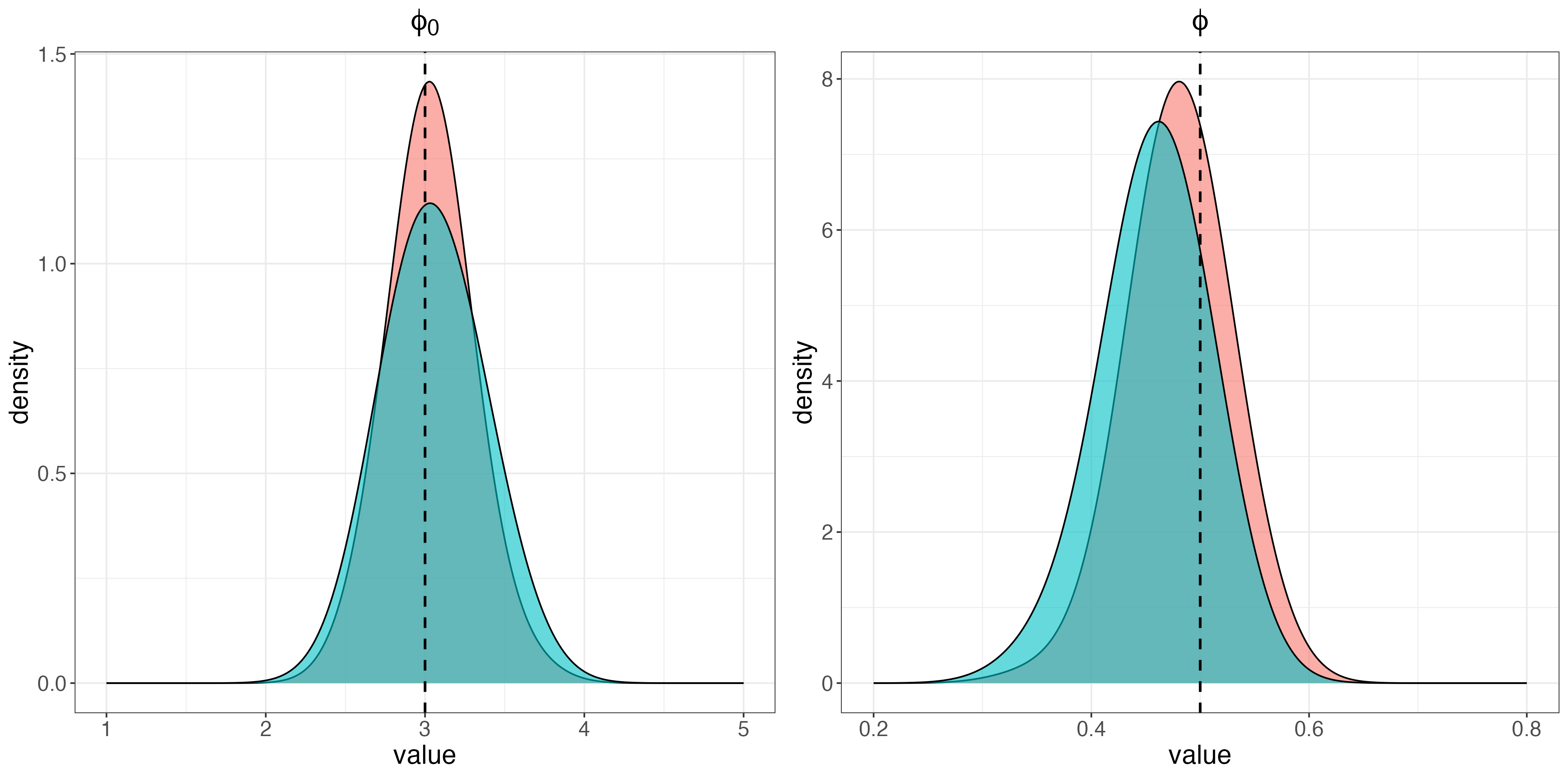}
    \caption{Empirical density functions of the R-GARCH(1,0) parameter estimates using simulated ranking time series data with missingness via MCEM (Algorithm \ref{alg:mcem}). One hundred data sets are simulated with $k= 10, n = 500, \phi_0 = 3$, and $\phi_1 = 0.5$ and either 10\% or 30\% of random missing entries. The different colors indicate 10\% (in red) or 30\% (in blue) of missingness, and the true parameter values and marked with vertical dashed lines.} \label{fig:mcem_experiment}
\end{figure}

\section{Application to Tennis Rankings}\label{sec:app}

Data containing the weekly rankings of professional male singles tennis players was sourced online from \hyperlink{https://www.kaggle.com/datasets/mimoopoo/atp-tennis-rankings-1990-to-2019}{Kaggle}, originally scrapped from the Association of Tennis Professionals (ATP) \hyperlink{https://www.atptour.com/en}{website}. It contains weekly rankings of the top one hundred athletes from 1990 to 2019. Naturally, there are changes in players entering or exiting the rankings across this period, with the total number of distinct individuals that appear at least once in the data being 718. To achieve some homogeneity of the player set, we decide to filter the data to the 2015-2019 period. This is done so that the effect of players disappearing from the ranking due to retirement, or emerging from having recently joined the sport competitively is reduced. This subset involves $n=172$ rankings and 202 players.

Our first aim is to analyse the evolution of players who were ranked in all 172 weeks, a subset of $k=31$ individuals. To this end, the relative ranking of such athletes is computed and our observed data becomes ${\underline{\pi}} \equiv (\pi_1, \ldots, \pi_{172})$, where $\pi_i \in \mathcal{P}_{31}$. Inferential procedures proposed for complete ranking time series in Section \ref{sec:mle} are applied to ${\underline{\pi}}$ under different values of $p$ and $q$. The R-GARCH model is fitted to different combinations of $p, q$, with the Hamming and Kendall distances. AIC and BIC are computed to perform order selection within the Kendall and Hamming R-GARCH models. We highlight that it is not sensible to perform distance selection under model information criteria because the R-GARCH can be written in terms of the distance trajectory and this changes according to the metric. The empirical autocorrelation function (ACF) and partial autocorrelation function (PACF) of the latter are displayed in Figure \ref{fig:acf_pacf_tennis}, which gives us an idea about the order to be considered. For both the Kendall and Hamming R-GARCH models, we fit combinations of $p, q \in \{0, \ldots, 3\}$, and compute AIC and BIC. The order $(p,q)=(3,0)$ is selected for both Kendall and Hamming R-GARCH models, based on either AIC or BIC. 

\begin{figure}
    \centering
    \includegraphics[width = 0.85\linewidth]{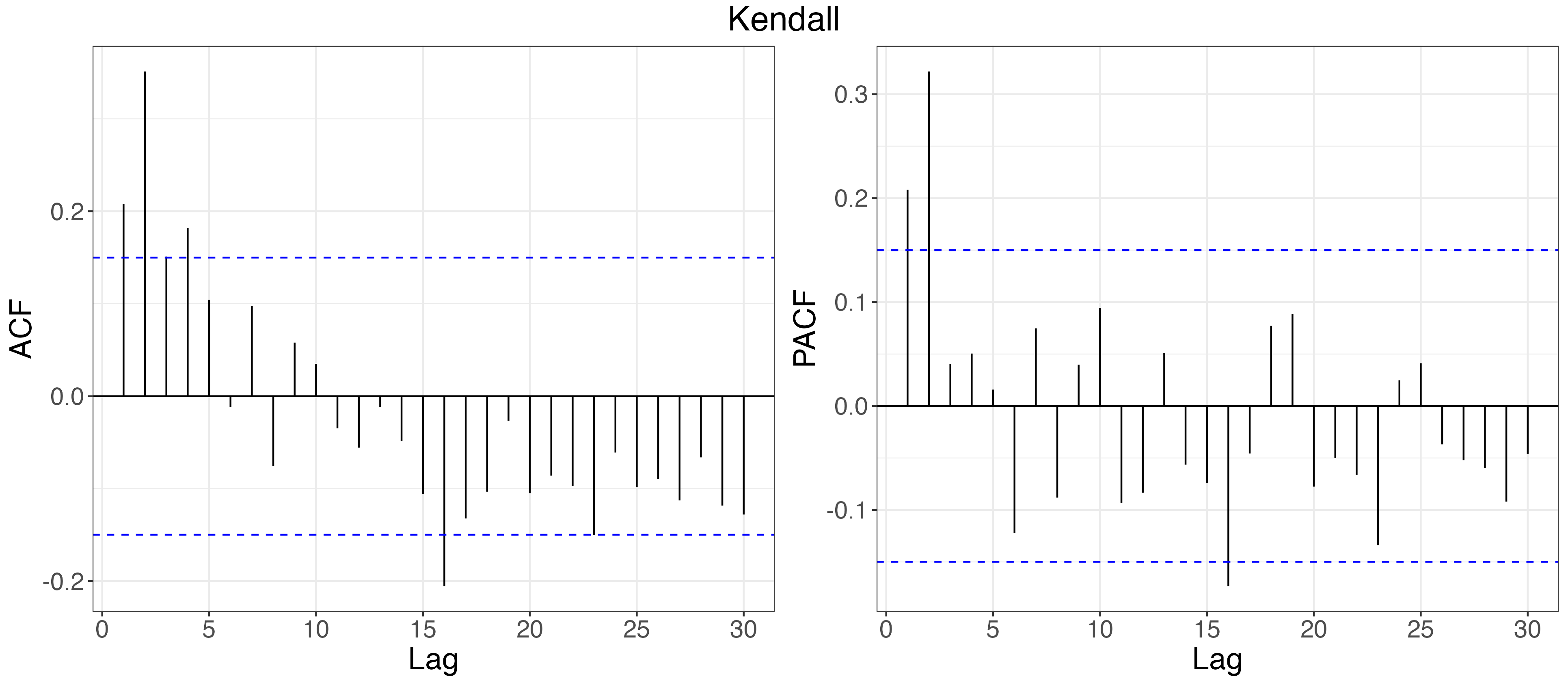}
        \includegraphics[width = 0.85\linewidth]{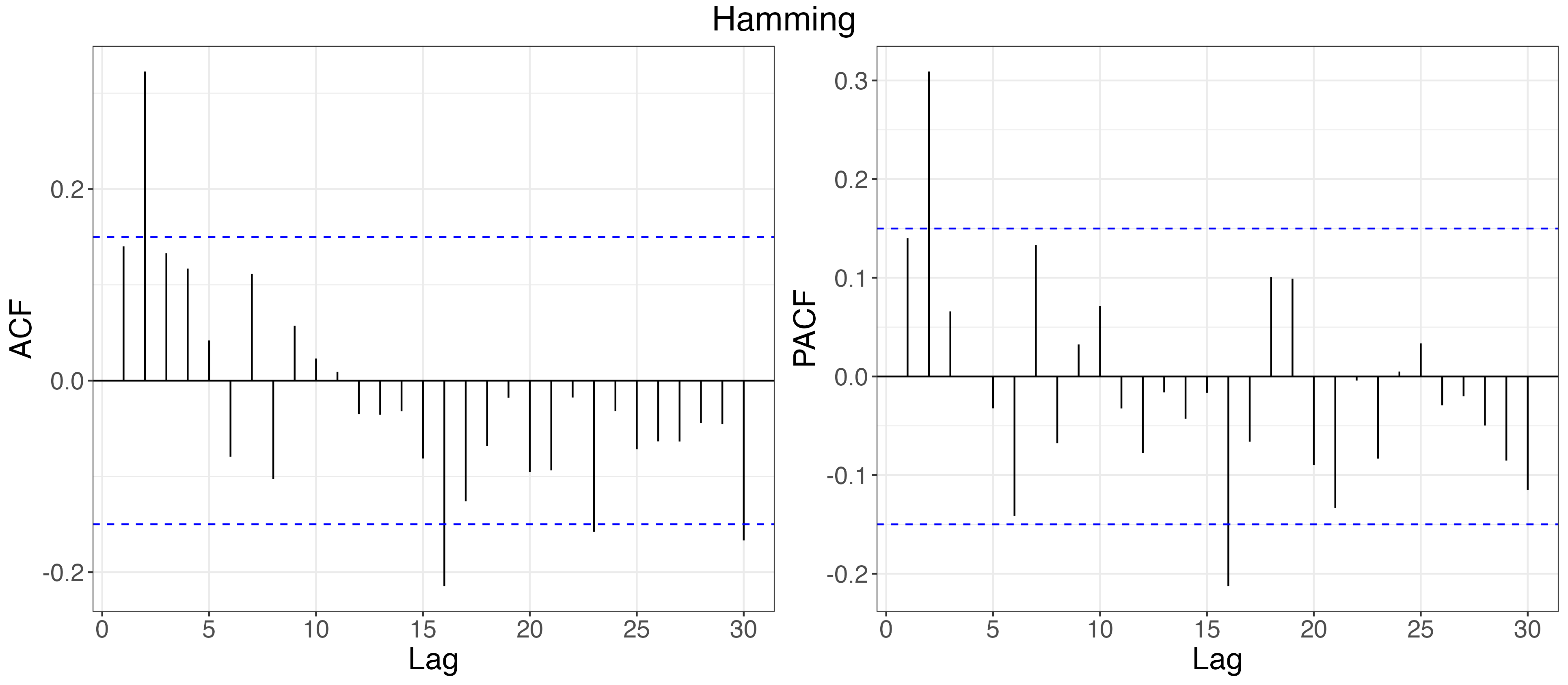}
    \caption{Autocorrelation function (ACF) and partial autocorrelation function (PACF) of the lag-one Kendall and Hamming distance trajectories $d(\pi_t, \pi_{t-1}), t = 2, \ldots, 172$, computed for the tennis rankings data.} \label{fig:acf_pacf_tennis}
\end{figure}

The models indicated by AIC and BIC are further inspected in Table \ref{tab:fitted_complete_tennis}, which contains the MLEs and their standard errors. 
An analysis of residuals is employed to compare the Kendall and Hamming R-GARCH(3,0) models. To this end, their fitted mean trajectories $\{ \hat{\mu}_t \}_{t=2}^{172}$ are obtained and these are used to compute the ordinary residuals $\hat{r}_t = d(\pi_t, \pi_{t-1}) - \hat{\mu}_t)$, for $t=5,\dots,172$.  Figure \ref{fig:residuals} displays the boxplot and ACF of $ \hat{r}_t$ computed from the models under comparison. To produce the boxplots on the left, the Hamming residuals are scaled to match the range of the Kendall ones. This is done by multiplying $\hat{r}_t$ by the factor ${k \choose 2}/k$. The latter is the ratio of distances' maximum values, which are ${k \choose 2}$ and $k$ respectively for the Kendall and Hamming forms. Once this is done, we compare the values of ordinary residuals from the two models via boxplots. These plots indicate that smaller deviations are obtained with the Kendall distance. The ACF of residuals shows that both models have captured well the source of autocorrelation. Overall, the residual analysis supports the decision to favour the Kendall R-GARCH(3,0) model in the analysis of tennis player's weekly rankings.

\begin{table}[]
\centering
\begin{tabular}{@{}lllll@{}}
\toprule
\multicolumn{1}{c}{} & \multicolumn{1}{c}{$\phi_0$} & \multicolumn{1}{c}{$\phi_1$} & \multicolumn{1}{c}{$\phi_2$} & \multicolumn{1}{c}{$\phi_3$} \\ \midrule
Kendall              & 32.080 (2.345)               & 0.274 (0.021)                & 0.377 (0.020)                & 0.063 (0.018)                \\
Hamming              & 6.955 (0.847)                & 0.070 (0.038)                & 0.320 (0.038)                & 0.061 (0.041)                \\ \bottomrule
\end{tabular}
\caption{Maximum likelihood estimates and their respective standard errors (in parentheses) of Kendall and Hamming R-GARCH(3,0) models fitted to the tennis rankings data. }\label{tab:fitted_complete_tennis}
\end{table}


\begin{figure}
    \centering
    \includegraphics[width=0.9\linewidth]{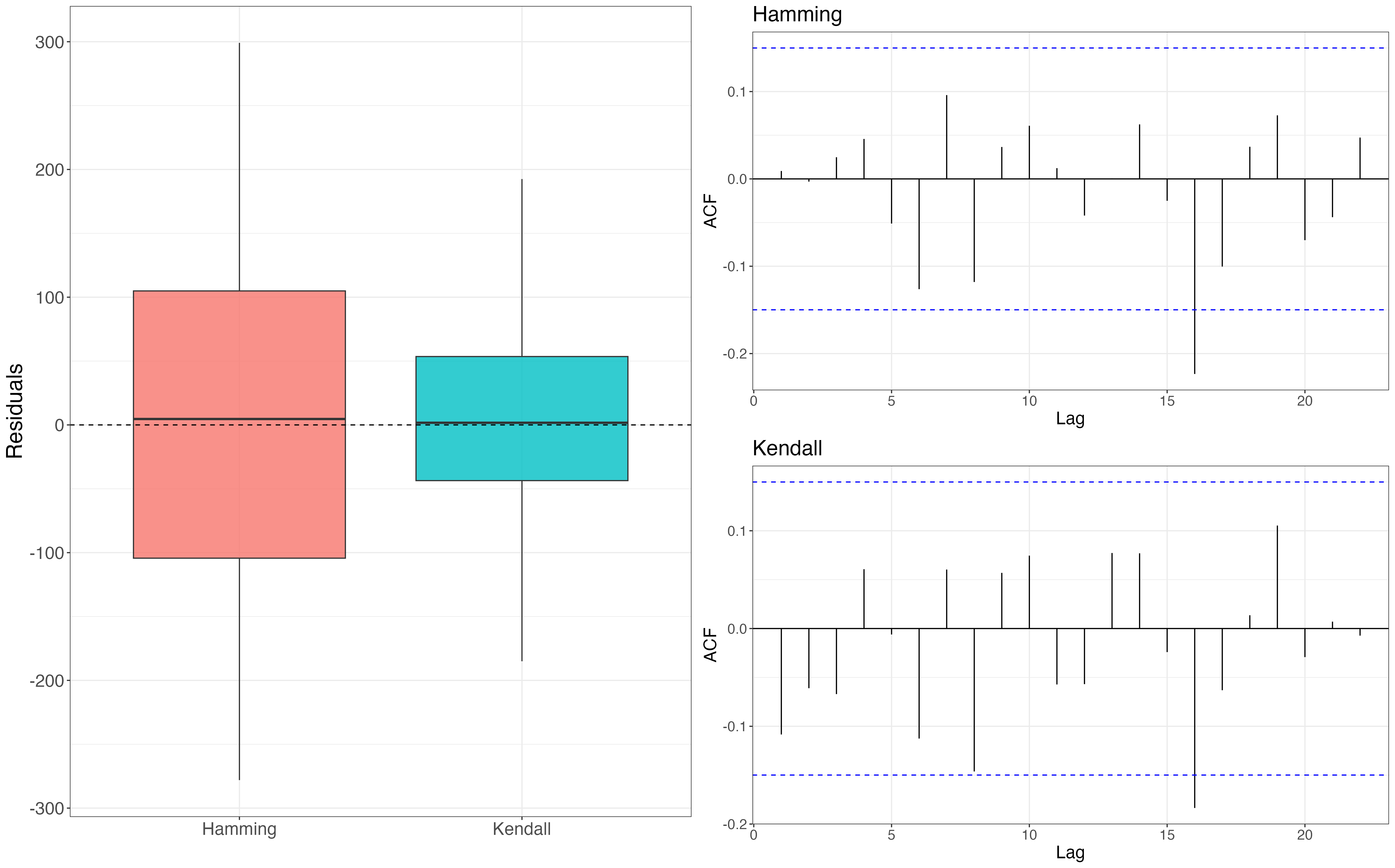}
    \caption{Residual analysis of the Kendall and Hamming R-GARCH(3,0) models fitted to the tennis rankings data.} \label{fig:residuals}
\end{figure}


\subsection{Incomplete Tennis Ranking Analysis}

In this subsection, we analyze the ten most competitive players from the 2015-2019 period. This classification is made based on the highest number of aggregate points per athlete, resulting in the set: \texttt{Novak Djokovic, Rafael Nadal, Roger Federer, Andy Murray, Dominic Thiem, Alexander Zverev, Marin Cilic, Kei Nishikori, Stan Wawrinka} and \texttt{Milos Raonic}. The filtered data that focuses on this set is a $172 \times 10$-dimensional matrix, which is represented in Figure \ref{fig:top10_ts}. The heatmap in this figure illustrates the ranking trajectories attributed to each player over time. Players are represented as rows and tile shades (from green to red) indicate their ranking each week. Tiles in grey color point out missing entries which happen for \texttt{Stan Wawrinka} and \texttt{Andy Murray} in June-August 2018.

\begin{figure}
    \centering
    \includegraphics[width = 0.7\linewidth]{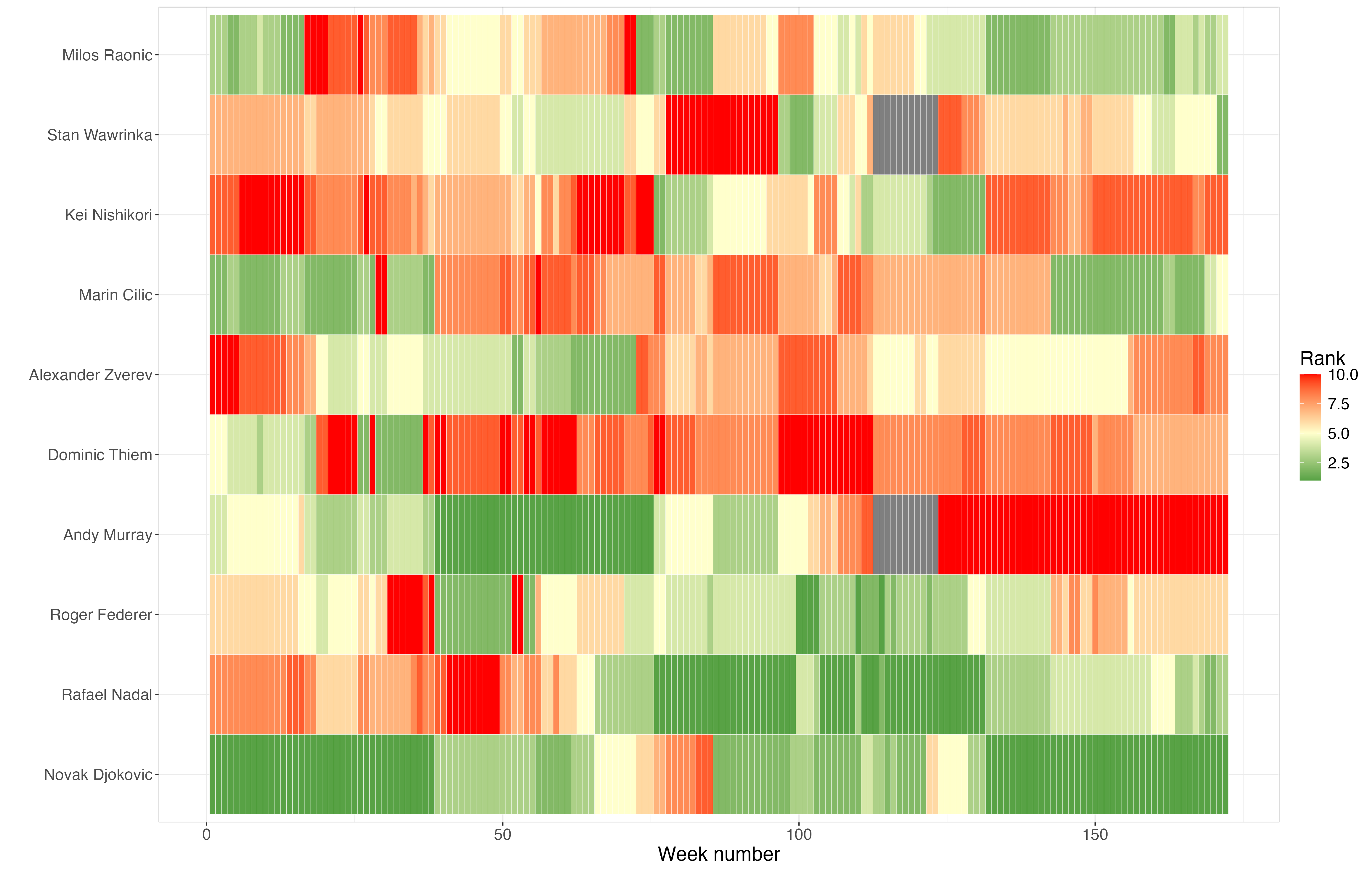}
    \caption{Rankings from top players (rows) at each week (columns). Each cell is colored according to the respective ranking position according to the legend key shown to the right.} \label{fig:top10_ts}
\end{figure}

The inferential strategy designed to handle incomplete rankings presented in Section \ref{sec:incomplete} is applied to this data. To reduce the computational burden of running the MCEM algorithm multiple times, the selection of $p$ and $q$ is made beforehand. This is done as follows. First, complete data is generated by sampling missing entries uniformly at random from the unranked set. Then, the Kendall and Hamming R-GARCH models are fitted with $p,q \in \{0, \ldots, 3\}$ using standard maximum likelihood estimation. This is repeated sometimes to mitigate sampling variation, and the results point order $(p,q)=(3,0)$ in the majority of times for both models. Parameter estimates and standard errors for the R-GARCH(3,0) models are then obtained with MCEM using $M=500$ and $\epsilon = 10^{-3}$. The results are displayed in Table \ref{tab:params_incomplete}. Models are compared via analysis of residuals, which is carried out as before. Although both successfully capture the data's autocorrelation, smaller residuals are obtained from the Kendall fit. The latter is then selected for a predictive task that is described in what follows.

\begin{figure}[!htb]
    \centering
    \begin{minipage}{0.4\textwidth}
        \centering
\begin{tabular}{@{}lll@{}}
\toprule
\multirow{2}{*}{} & \multicolumn{2}{l}{Estimate (SE)} \\ \cmidrule(l){2-3} 
                  & Kendall         & Hamming         \\ \midrule
$\phi_0$          & 3.700 (0.312)   & 3.424 (0.296)   \\
$\phi_1$          & 0.000 (0.035)   & 0.000 (0.034)   \\
$\phi_2$          & 0.198 (0.034)   & 0.199 (0.035)   \\
$\phi_3$          & 0.067 (0.031)   & 0.104 (0.032)   \\ \bottomrule
\end{tabular}
\captionof{table}{Kendall and Hamming R-GARCH(3,0) model parameter estimates and their standard errors (in parentheses) via MCEM for the top ten tennis players data.}\label{tab:params_incomplete}
    \end{minipage}\hfill
     \begin{minipage}{.55\textwidth}
        \centering
\includegraphics[width = .8\linewidth]{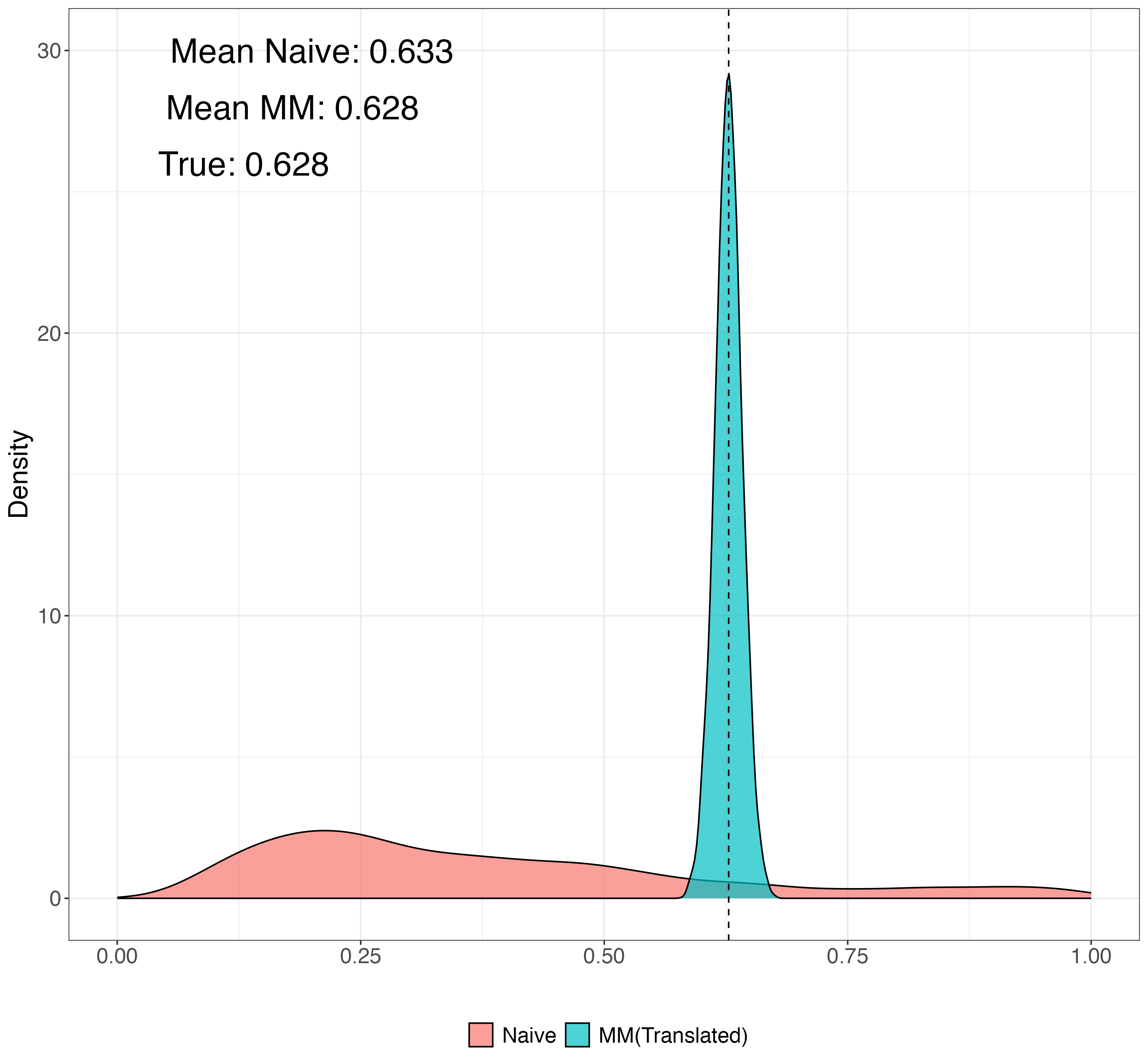}
             \captionof{figure}{Replicated importance sampling estimates of $\mbox{pr}_{t+1}$ obtained under naive and designed Mallows model importance densities. }\label{fig:IS}
    \end{minipage}
\end{figure}

One natural point of interest is to make predictions with the fitted model. Under the R-GARCH model, it is straightforward to obtain the one-step-ahead forecast for the average distance from the current rank, i.e. $E[d(\pi_{t+1}, \pi_t)|\mathcal F_{t}]$. Although this gives a sense of how close to $\pi_{t}$ the next ranking is expected to be, it does not say much about the permutation itself. For example, one could be interested in some quantity involving the conditional distribution of $\pi_{t+1}$ given $\pi_{t}$ such as: \textit{What is the probability that the top player in week $t$ maintains his position in the next week?} Mathematically, this question translates as computing $\mbox{pr}_{t+1}\equiv \displaystyle\sum_{\pi_{t+1} \in \mathcal{P}_k} I\{ \pi_{t+1}(1) = \pi_{t}(1)\} \frac{ \exp( -\theta_{t+1} d(\pi_{t+1}, \pi_t) )}{\Psi(\theta_{t+1})}$, where $I\{\cdot\}$ is as an indicator function. We replace $\theta_{t+1}$ by its estimate derived from $\widehat{\mu}_{t+1}$ via Algorithm \ref{alg:solve_theta}. However, generating all possible $\pi_{t+1}$ compatible with the condition of interest can be unfeasible depending on the value of $k$. To tackle this, an estimator of $\mbox{pr}_{t+1}$ is designed based on Importance Sampling (IS), which will be presented in what follows.

Consider the above question of interest and $t = 172$. The player ranked first at this time was \texttt{Novak Djokovic}, and we would like to know the probability that he retains the first rank in the next classification. Under IS, an estimator of $\mbox{pr}_{t+1}$ is constructed by sampling $\pi_{t+1}$ from some (importance) distribution, say $g(\cdot)$. The IS estimator based on $L$ samples from $G$, say $\{\pi^{(l)}\}_{l=1}^L$, is given by
\begin{eqnarray}\label{eq:is_estimator}
\widehat{\mbox{pr}}_{t+1} = \frac{1}{L} \sum_{l=1}^L h(\pi^{(l)}_{t+1}(1)) \frac{p(\pi_{t+1}^{(l)}|\mathcal F_t,\widehat{\theta}_{t+1})}{g(\pi^{(l)}_{t+1})},    
\end{eqnarray}
where $h(\pi^{(l)}_{t+1}(1))$ is a function of $\pi_{t+1}$; for our purpose, $h(\pi^{(l)}_{t+1}(1)) = I\{\pi^{(l)}_{t+1}(1) = \pi_{t}(1)\}$. Well-known features of IS are that the estimator is unbiased and its variability depends crucially on the importance density. To reduce the variance, we would like to simulate from a distribution that is proportional to $h(\pi^{(l)}_{t+1})p(\pi_{t+1}^{(l)}|\mathcal{F}_t,\widehat{\theta}_{t+1})$ such that $g(\cdot)$ is high when $h(\cdot) p(\cdot|\mathcal F_t,\widehat{\theta}_{t+1})$ is high. Evidently, the first step in this direction is to fix $I\{ \pi^{(l)}_{t+1}(1) = \pi_t(1)\} = 1$. Next, observe that $p(\pi_{t+1}|\mathcal F_t,\widehat{\theta}_{t+1})$ places high weight at rankings $\pi_{t+1}$ that make $d(\pi_{t+1}, \pi_t)$ low. Hence, a good importance distribution is a Mallows model with mode $\pi_t$. We can also make the spread parameter of this Mallows importance density similar to that of $p(\pi_{t+1}|\mathcal F_t,\widehat{\theta}_{t+1})$ by `translating' $\widehat{\theta}_{t+1}$ to the $k - k_{\mathcal{I}}$ setting, where $k_{\mathcal{I}}$ is the number of positions fixed under $I\{\cdot\}$. Based on these ideas, our $g(\cdot)$ choice is a Mallows distribution with mode given by the relabel of the unfixed $\pi_t$ and spread $\theta^* = \widehat{\theta}_{t+1} {k - k_{\mathcal{I}} \choose 2}/{k \choose 2}$ where ${k - k_{\mathcal{I}} \choose 2}/{k \choose 2}$ is the ratio of maximum disagreements under $k - k_{\mathcal{I}}$ and $k$. 

To demonstrate that this strategy is effective and achieved low IS variability, we provide its comparison to a naive $g(\cdot)$ choice. The latter consists of simply fixing $I\{\pi^{(l)}_{t+1}(1) = \pi_{t}(1)\} =1$ and sampling all the remaining items uniformly at random, hence $g(\pi) = 1/(k-1)!$. The alternative estimators of $\mbox{pr}_{t+1}$ are compared by gathering 10K replications of each, setting $L= 500$ in both. Results are displayed in Figure \ref{fig:IS} via histograms. Given that $k=10$, it is possible to compute $\mbox{pr}_{t+1}$ exactly for this application and compare the estimates with the exact probability. This is marked with a vertical dashed line and displayed on the top-left corner alongside the average $\widehat{\mbox{pr}}_{t+1}$ per method. As expected, the empirical mean of $\widehat{\mbox{pr}}_{t+1}$ is close to the exact probability under both estimation procedures, which follows from an IS property. However, the variance decrease is drastic under the well-designed importance density. Having considered a case where $\mbox{pr}_{t+1}$ is tractable allowed us to check and validate the proposed approach. The designed IS algorithm makes it possible to make predictions on the rankings space when $k$ is moderate or large.


\section{Conclusions}\label{sec:conclusion}

This paper presents a novel approach for modelling ranking time series data. Our formulation integrated two widespread methods within the rankings and time series literature: the Mallows model (MM) and the GARCH approach. This integration, denoted by R-GARCH models, was achieved via a reparameterisation of the Mallows distribution where autoregressive and feedback components were incorporated through its conditional mean. 
Theoretical and computational methods concerning the R-GARCH models were carefully developed. Stationarity and ergodicity properties of the process were established. Parameter estimation was addressed for both complete and incomplete rankings, using maximum likelihood estimation and a Monte Carlo EM algorithm, respectively. These inferential methods were validated via simulation studies and later applied to tennis ranking time series data. Specifically, we focused on the weekly rankings of professional male tennis players where model selection, diagnostic tools, and prediction were considered. To carry out predictions in the permutation space, an importance sampling (IS) technique was introduced. With the design of an effective importance density, our algorithm extended prediction on $\mathcal{P}_k$ to moderate or large $k$ number of individuals/items.

Finally, some points of future research that deserve attention are the following. Inference for the R-GARCH models under distances with non-analytical forms of either (or both) expected value and normalisation constant is a challenging point. This may require computationally intensive methods, particularly in root-solving algorithms and intractable inference procedures. In another vein, for applications where additional information is available, a natural goal is the incorporation of exogenous variables.

\spacingset{1} 

\end{document}